\definecolor{darkred}{RGB}{100,0,0}
\definecolor{darkgreen}{RGB}{0,100,0}
\definecolor{darkblue}{RGB}{0,0,150}
\def\spacingset#1{\renewcommand{\baselinestretch}%
{#1}\small\normalsize} 
\newtheorem{thm}{Theorem}
\newtheorem{prp}{Proposition}
\newtheorem{lem}{Lemma}
\newtheorem{assump}{Assumption}
\theoremstyle{remark}
\def\beq{\begin{equation}} 
\def\eeq{\end{equation}}
\def\beqn{\begin{eqnarray*}}
\def\eeqn{\end{eqnarray*}}
\def\Bitem{\begin{itemize}\setlength{\itemsep}{.2in}}
\def\bitem{\begin{itemize}\setlength{\itemsep}{.05in}}
\def\eitem{\end{itemize}}
\def\Benum{\begin{enumerate}\setlength{\itemsep}{.2in}}
\def\benum{\begin{enumerate}\setlength{\itemsep}{.05in}}
\def\eenum{\end{enumerate}}
\def\bmult{\begin{multline*}}
\def\emult{\end{multline*}}
\def\bcenter{\begin{center}}
\def\ecenter{\end{center}}
\def\bframe{\begin{frame}}
\def\eframe{\end{frame}}
\def\cF{\mathcal{F}}
\def\cN{\mathcal{N}}
\def\cV{\mathcal{V}}
\def\bbB{\mathbb{B}}
\def\bbM{\mathbb{M}}
\newcommand{\E}{\operatorname{\mathbb{E}}}
\renewcommand{\P}{\operatorname{\mathbb{P}}}
\def\eps{\varepsilon}
\def\1{\mathbbm{1}}
\begin{document}

\title{\bf{A Robust Instrumental Variable Method Accounting for Treatment Switching in Open-Label Randomized Controlled Trials}}

\author{Andrew Ying\\
Department of Statistics and Data Science, The Wharton School,\\
University of Pennsylvania}
\date{}

\maketitle
\begin{abstract}
In a randomized controlled trial, treatment switching (also called contamination or crossover) occurs when a patient initially assigned to one treatment arm changes to another arm during the course of follow-up. Overlooking treatment switching might substantially bias the evaluation of treatment efficacy or safety. To account for treatment switching, instrumental variable (IV) methods by leveraging the initial randomized assignment as an IV serve as natural adjustment methods because they allow dependent treatment switching possibly due to underlying prognoses. However, the ``exclusion restriction'' assumption for IV methods, which requires the initial randomization to have no direct effect on the outcome, remains questionable, especially for open-label trials. We propose a robust instrumental variable estimator circumventing such a caveat. We derive large-sample properties of our proposed estimator, along with inferential tools. We conduct extensive simulations to examine the finite performance of our estimator and its associated inferential tools. An R package “ivsacim” implementing all proposed methods is freely available on R CRAN. We apply the estimator to evaluate the treatment effect of Nucleoside Reverse Transcriptase Inhibitors (NRTIs) on a safety outcome in the Optimized Treatment That Includes or Omits NRTIs trial.
\end{abstract}

{\it Keywords: Treatment switching; Open-label; G-estimation; Instrumental variable; Exclusion restriction.}
\spacingset{1.5}
\section{Introduction}\label{sec:intro}
Randomized controlled trials (RCTs) remain as the golden standard for treatment effect evaluation because the randomization guarantees that subjects within different treatment groups are exchangeable. However, even under an RCT, such exchangeability might be contaminated due to treatment switching (also called contamination or crossover) after initial randomization, that is, when a patient randomized to one treatment arm changes to another arm during the course of follow-up \citep{cuzick1997adjusting, demetri2006efficacy, motzer2008efficacy, morden2011assessing, tsiatis2021estimating}. Treatment switching may happen when patients in the control arm switch to the experimental treatment which has manifested short-term effect, in the hope of improving prognosis. It can also occur due to changes in treatment guidelines, whereby treatment options might change during the course of follow-up. Treatment switching might bias the overall survival and the intent-to-treat effect. Inaccurate cost-effectiveness estimates caused by not appropriately handling treatment switching may result, and finite healthcare resources may be wasted.

Traditional methods \citep{robins1998structural, robins2000correcting, latimer2017adjusting, jimenez2017evaluating, latimer2018assessing, sullivan2020adjusting} to deal with treatment switching hinge heavily on a ``no unmeasured confounding (NUC)'' assumption that the treatment switching process is random through time, possibly conditional on measured baseline or time-varying characteristics. Unfortunately, NUC is unrealistic and tends to fail due to dependence between the switching process and the outcome of interest upon unknown factors because even under the sharp null hypothesis of no treatment effect, patients who switch treatment tend to have a different prognosis than patients who remain on their originally assigned treatment. 

To overcome unmeasured confounding, instrumental variables (IV) methods \citep{angrist1996identification, angrist2001instrumental, wooldridge2010econometric, martinussen2017instrumental} were carefully designed. An IV is a pre-treatment variable known to be associated with the treatment variable (IV relevance), to only affect the outcome through its effects on the treatment (IV exclusion restriction), and to otherwise be independent of any unmeasured confounders (IV independence). In randomized trials, the initial randomized treatment assignment can be readily leveraged as an IV because IV relevance and IV independence naturally hold. IV exclusion restriction is also reasonable, at least under double-blinded trials. For randomized trials with censored time-to-event outcomes in the presence of treatment switching, IV approaches include, for instance, the rank-preserving structural failure time model (RPSFTM) of \citet{robins1991correcting}, the structural cumulative failure time model (SCFTM) of \citep{shi2021instrumental}, and the structural cumulative survival model (SCSM) of \citet{ying2022structural}. See \citet{ying2022structural} for a comparison of these IV approaches. All aforementioned IV methods rely heavily on three IV assumptions listed above, among which the exclusion restriction might fail to hold under an open-label trial when site investigators and participants knew the treatment assignments. This is because one cannot rule out the possibility that study physicians might modify a patient's treatment course as a result of their randomized treatment assignments in a manner that may in turn directly impact the outcome in view, or patients might change their health-seeking behavior after treatment assignment. These induce an unintended direct effect of the randomized treatment on the outcome and hence violate the exclusion restriction assumption of an IV. Such acknowledgment invalidates all causal claims on the basis of the IV methods above. 

In this paper, we propose an instrumental variable estimator to handle treatment switching under a structural cumulative survival model without assuming exclusion restriction. Instead, we adapt the ``no interaction with unmeasured selection'' assumption proposed in \citet{tchetgen2021genius} to the time-varying treatment process. Our estimator allows for a time-varying treatment effect and we further develop an asymptotic framework for inference. We evaluate the proposed estimator's finite-sample performance via extensive simulations. The proposed estimator and inferential tools are implemented in the freely available R package ``ivsacim'' \citep{ying2020ivsacim} on R CRAN. We apply the proposed approach to evaluate the exposure effect of Nucleoside Reverse Transcriptase Inhibitors (NRTIs) on a key safety outcome (time to first severe or worse sign or symptom) in participants receiving a new antiretroviral regimen that omitted or added NRTIs in the open-label Optimized Treatment That Includes or Omits NRTIs (OPTIONS) trial \citep{tashima2015hiv}.

We give a preview of the data we aim to analyze since throughout the paper we constantly refer to it. 
The Optimized Treatment That Includes or Omits NRTIs trial was a multicenter, open-label, prospective, randomized, controlled study evaluating the benefits and risks of omitting versus adding NRTIs to a new optimized antiretroviral regimen for HIV-infected patients. Participants were randomly assigned either to omit or to add NRTIs after choosing an optimized regimen and an NRTI regimen. Treatment switching was present in this trial due to the potential discontinuation of NRTI assignment, which occurred when a participant in the omit-NRTIs group started any NRTIs or when a participant in the add-NRTIs group failed to initiate or permanently discontinued all NRTIs. Therefore, both directions of treatment switching could occur. \citet{ying2022structural} investigated the treatment effect on a safety outcome, the time to first severe or worse sign or symptom, under their SCSM leveraging initial randomized as an IV, assuming the exclusion restriction. However, a caveat here is that the OPTIONS trial was open-label, that is, site investigators and participants knew the treatment assignments. This might lead to the violation of the exclusion restriction, thus invalidating their analysis, which has motivated our study to alleviate such concern by proposing a robust IV method and reanalyzing the OPTIONS trial without assuming the exclusion restriction.

The remainder of the article is organized as follows. We introduce notation, structural cumulative survival models, and assumptions in Section \ref{sec:pre}. 
In Section \ref{sec:est}, we construct a recursive estimator under the SCSM and assumptions listed in Section \ref{sec:pre}. We conduct extensive simulations to evaluate finite-sample performance of our estimator in Section \ref{sec:simu} and we apply our proposed estimator to reanalyze the OPTIONS trial in Section \ref{sec:real}. Proofs and additional theoretical results are provided in the appendix.

\section{Preliminaries}\label{sec:pre}
Define
\begin{itemize}
\item $T$, a time to event outcome of interest, $C$, potential censoring time, $X = \min(T, C)$, a subject's censored event time, and $\Delta = \mathbbm{1}(T\leq C)$, a subject's observed event indicator;  
\item We introduce the counting process notation. We write $N(t) = \mathbbm{1}(X \le t, \Delta = 1)$ and $Y(t) = \mathbbm{1}(X \ge t)$ as the observed counting process and the associated at-risk processes;
\item We assume that recorded data on treatment do not change except at discrete times $\{1, \cdots, M\}$. Thus the time-varying treatment $D(m) = 1$ if subject $i$ is treated or exposed at time $m$, 0 otherwise. We write $\bar D(m) = \{D(l): 1 \leq l \le m\}$. For any $m > X$, $D(m)$ is not observed. We define $D(m) = 0$ for $m > X$, so that the whole treatment process is well defined for each subject even after the outcome event has occurred. Here we assume treatment is binary only to facilitate exposition and suit our data application. Our framework can accommodate any treatment structure like multi-level and continuous;
\item $Z$ denotes the instrumental variable corresponding to the randomized treatment assignment in a randomized trial;
\item We define $T\{\bar d(m), 0, z\}$, the potential time to event had possibly contrary to fact, the subject been assigned to treatment $z$, followed the treatment regime $\bar d(m)$ up to time $m$ and the control treatment thereafter. We make the consistency assumption that $T = T\{\bar d(m), 0, z\}$ with probability one for individuals with observed $\bar D(m) = \bar d(m)$, $D(l) = 0$, for $l > m$, and $Z = z$. We further assume that intervening on exposure can only affect survival after the time of that exposure, in other words, the event $T\{\bar D(m - 1), 0, Z\} \geq m$ occurs if and only if the event $T\{\bar D(l), 0, Z\} \geq m$ also occurs for all $l \geq m$. It follows that $\{T \geq t\}$ and $\{T\{\bar D(m), 0, Z\} \ge t\}$ are the same events for $t \in [m, m + 1)$.
\end{itemize}
We later propose an instrumental variable estimator under the handle treatment switching under a structural cumulative survival model (SCSM) without assuming exclusion restriction. We first introduce SCSMs. An SCSM imposes models on the negative log-ratio $\gamma_m\{t;\bar D(m), Z\}$ of two counterfactual survival probabilities at time $t$ under treatment strategies that differ only at time $m$ for each time $m < t$, or equivalently
\begin{align*}
    \exp\left[-\gamma_m\{t;\bar D(m), Z\}\right] = \frac{\P\left[T\{\bar D(m), 0, Z\} > t|\bar D(m), Z, T \ge m\right]}{\P\left[T\{\bar D(m - 1), 0, z = 0\} > t|\bar D(m), Z, T \ge m\right] } ,
\end{align*}
for any $t \ge m$. 
An SCSM therefore may be interpreted as encoding for individuals still at risk for the outcome at time $m$ with covariates, IV and treatment history $Z, \bar{D}(m)$, the ratio of survival probabilities of remaining event-free at time $t \ge m$ upon receiving one final blip of treatment at time $m - 1$ versus at time $m$. 

We prefer SCSMs over the rank-preserving structural failure time models (RPSFTMs) \citep{robins1991correcting} and structural cumulative failure time models (SCFTMs) \citep{shi2021instrumental}. This is because estimation under RPSFTMs requires artificial censoring to address administrative censoring, which can further increase bias, aggravate efficiency loss, and render estimation computationally challenging \citep{robins1991correcting, joffe2001administrative, joffe2012g, vansteelandt2014structural, latimer2019causal}. On the other hand, SCSMs do not require artificial censoring because it operates on cumulative survival models rather than directly dealing with actual survival times. Another important advantage of SCSMs is that the estimators based on SCSMs typically have a closed form via a recursive solution and therefore are guaranteed to exist. In contrast, the proposed estimating equation under both RPSFTMs and SCFTMs may not admit a solution, even when it does, such a solution may not be uniquely defined.

We choose to assume
\begin{equation}
    \gamma_m(t;\bar D(m), Z) = \gamma_m(t; D(m), Z) = \int_m^{t \wedge (m + 1)} D(m) dB_D(s) + ZdB_Z(s). \label{eq:condstructuralinvalid}
\end{equation}
This posits:
\begin{enumerate}
    \item The negative log-ratio $\gamma_m\{t;\bar D(m), Z\}$ is encoded by two parameters $B_D(t)$ and $B_Z(t)$ capturing the actual treatment effect by $\bar D(t)$ and the direct effect of the knowledge of the treatment assignment $Z$, respectively. Allowing $B_Z(t)$ to exist is one main contribution of this paper.
    \item The effect is allowed to vary across time nonparametrically, which is flexible to capture the possible time-varying effects of $\bar D$ and $Z$;
    \item The effect at time $m$ is short-lived in the sense that $\gamma_m(t; \bar D(m), Z)$ = $\gamma_m(m + 1; D(m), Z)$ for all $t \ge m + 1$, in other words, only the treatment $D(m)$ at time $m$ can modify the survival between $[m, m + 1)$. This assumption is needed for identification given only one instrument, which was also assumed in \citet{ying2022structural, shi2021instrumental, robins1991correcting}.
\end{enumerate}
Besides being interpreted as a ratio of conditional survival probabilities, the model specification by \eqref{eq:condstructuralinvalid} delivers another convenient interpretation of the contrast of the marginal survival functions, with an additional condition of no-current treatment value interaction \cite{robins1994adjusting}, 
\begin{align}
    &\frac{\P\left[T\{\bar D(m - 1),d(m), 0, Z\} > t|\bar D(m - 1),  D(m) = d(m), Z, T \ge m\right]}{\P\left[T\{\bar D(m - 1), 0, z = 0\} > t|\bar D(m - 1),  D(m) = d(m), Z, T \ge m\right] } \\
    &=\frac{\P\left\{T\{\bar D(m - 1), d(m), 0, Z\} > t|\bar D(m - 1), D(m) = 0, Z = 0, T \ge m\right]}{\P\left[T\{\bar D(m - 1), 0, z = 0\} > t|\bar D(m - 1), D(m) = 0, Z = 0, T \ge m\right] }. \label{eq:nocurrenttrtinteraction}
\end{align}   
It essentially states that the instantaneous causal effect of one final blip of treatment at time $m$ among individuals who were treated at time $m$ is equal to that among individuals who were not treated at time $m$ conditional on past history. In fact, under \eqref{eq:condstructuralinvalid} and the no-current treatment value interaction condition \eqref{eq:nocurrenttrtinteraction}, one can show that 
\begin{align}
    \frac{\P\left\{T(\bar d = \bar 1, z) > t\right\}}{\P\left\{T(\bar d = \bar 0, z) > t\right\}} 
    = \exp\left\{-\int_{0}^{t} 1 dB_D(s)\right\}= \exp\left\{-B_D(t)\right\}.
\end{align}
Therefore our estimand $B_D(t)$ can be interpreted as the difference in the log-marginal cumulative intensity function comparing always-treated versus never-treated regimes up to time $t$ with all subjects randomized to either treatment or control, which encodes the causal effect of interest, that is, the causal effect of treatment received. Also, we have
\begin{align}
    \frac{\P\left\{T(\bar d, z = 1) > t\right\}}{\P\left\{T(\bar d, z = 0) > t\right\}} 
    = \exp\left\{-\int_{0}^{t} 1 dB_Z(s)\right\}= \exp\left\{-B_Z(t)\right\},
\end{align}
where $B_Z(t)$ can be interpreted as the controlled direct effect \citep{vanderweele2011controlled, pearl2022direct} of the knowledge of the treatment assignment by clinicians and patients on the outcome.

As a special case of this model specification \eqref{eq:condstructuralinvalid} and often is of interest, one may further impose that the treatment effects are constant as a function of the timing of the final treatment blip, that is, setting
\begin{equation}\label{eq:const}
    \gamma_m\{t;\bar D(m), Z\} = \beta_D D(m)(t - m) + \beta_ZZ(t - m),
\end{equation}
which later we refer to as the ``constant hazards difference model''. This model encodes the SCSM analog of the ``common treatment effect'' assumption of the rank preserving structural failure time models \citep{robins1991correcting}, which states that the treatment effect is the same for all individuals (with respect to time spent on treatment) regardless of when treatment is received. 

Note that both SCSMs \eqref{eq:condstructuralinvalid} and \eqref{eq:const} described above are guaranteed to be correctly specified under the null hypothesis of no treatment effect, an appealing robustness property of the proposed framework.

Our proposed strategy leverages the randomization process as an instrumental variable satisfying two key standard IV assumptions, but not the exclusion restriction:
\begin{assump}[IV relevance]\label{assump:ivrelevance}
The instrument is associated with the exposure at $m$ for individuals still  at risk for the event time for all $m$; specifically, 
\begin{equation}
Z \not\perp D(m)~|~T \ge m, \bar D(m - 1).
\end{equation}
\end{assump}
IV relevance requires that for subjects who remain at risk for the outcome event at time $m$, the instrument remains predictive of current treatment status even after conditioning on treatment and covariate history. This is typically a reasonable assumption in a randomized trial, given that individuals randomized to the active arm are more likely than in the control arm to be treated over time, even upon conditioning on their history.
\begin{assump}[IV independence]\label{assump:ivindependence}
The instrument variable is independent of the potential outcome under no treatment, 
\begin{equation}
Z \perp T(\bar d = \bar 0, z = 0).
\end{equation}
\end{assump}
IV independence ensures that the initial randomization itself is unconfounded, which is apparently satisfied in a randomized trial, no matter double-blinded or open-label.

For completeness, here we introduce the exclusion restriction typically assumed in the IV literature that we do not impose. The exclusion restriction assumes that $T(\bar d, z) = T(\bar d)$, which rules out the possibility that randomization itself can impact the outcome via a pathway not involving treatment actually taken. However, as we mentioned earlier, this assumption is inclined to fail in an open-label randomized trial, as the knowledge of assigned treatment may change clinicians' treatment guidelines or patients' health-seeking behavior, which in turn influences the outcome. \citet{ying2022structural} not only imposed model \eqref{eq:condstructuralinvalid}, Assumptions \ref{assump:ivrelevance}, \ref{assump:ivindependence} but also assumed the exclusion restriction and set $B_Z(t) \equiv 0$ in \eqref{eq:condstructuralinvalid}, leading to
\begin{align}
    \frac{\P\left[T\{\bar D(m), 0\} > t|\bar D(m), Z, T \ge m\right]}{\P\left[T\{\bar D(m - 1), 0\} > t|\bar D(m), Z, T \ge m\right] } = \exp\left[-\int_m^{t \wedge (m + 1)} D(m) dB_D(s)\right],\label{eq:ytt}
\end{align}
which asserts that the instrument $Z$ does not modify the effect of the treatment $\bar D(m)$ on the outcome $T$ on the multiplicative scale. We, however, do not assume them. 

The relaxation of the exclusion restriction  leads to another unknown parameter $B_Z(t)$ compared to that in \citet{ying2022structural} which requires an additional estimating equation to identify. Inspired by \citet{tchetgen2021genius}, we impose the following assumption leveraging the treatment process as:
\begin{assump}[IV No Interaction with Unmeasured Confounders]\label{assump:ivnointeraction}
We assume the treatment process satisfies
\begin{equation}\label{eq:propscore}
    \E\{D(t)|Z, T(\bar d = \bar 0, z = 0)\} = \alpha_1(t; Z) + \alpha_2\{t; T(\bar d = \bar 0, z = 0)\}.
\end{equation}
\end{assump}
Such a linear model for treatment process with no interaction between $Z$ and $T(\bar d = \bar 0, z = 0)$ is typically assumed in IV literature \citep{li2015instrumental, tchetgen2015instrumental, ying2019two}, where they introduced an unknown confounder $U$ for convenience, playing a similar role here as $T(\bar d = \bar 0, z = 0)$.

We make a standard conditional independent censoring assumption.
\begin{assump}[Conditional independent censoring]\label{assump:condindcensoring}
\begin{equation}
C \perp (T, \bar D(t), Z).
\end{equation}
\end{assump}
This censoring assumption simplifies estimation so that we can concentrate on dealing with treatment switching. Note that although not further pursued here, the above assumption can be relaxed substantially by only requiring that $C \perp T ~|~\bar D(m), Z, X \geq m$. This, on the other hand, requires further adjustment for dependent censoring possibly by standard inverse probability censoring weighting.

\section{Estimation}\label{sec:est}
Suppose we observe $n$ independent and identically distributed samples. Define $\E$ and $\E_n$ as the population mean and the empirical mean. Also we write $Z^c$ as $Z - \E_n(Z)$ and $D(t)^c$ as $D(t) - \E_n\{D(t)|Z\}$. $\E_n\{D(t)|Z\}$ is nonparametric because $Z$ is binary. We construct an explicit and recursive estimator 
\begin{equation}\label{eq:empiricalest}
\begin{pmatrix}
\hat B_D(t)\\
\hat B_Z(t)
\end{pmatrix}
= \int_0^t\{\hat \bbM(s)\}^\dagger \E_n \left[
\begin{pmatrix}
Z^c\\
Z^cD(s)^c
\end{pmatrix}e^{\int_0^{s-}D(u)d\hat B_D(u) + Zd\hat B_Z(u)}dN(s)\right],
\end{equation}
where
\begin{align}
    \hat \bbM(s)
    = \E_n \left[
    \begin{pmatrix}
    Z^cD(s) & Z^cD(s)^cD(s)\\
    Z^cZ &Z^cD(s)^cZ
    \end{pmatrix}
    Y(s)e^{\int_0^{s-}D(u)d\hat B_D(u) + Zd\hat B_Z(u)}\right],
\end{align}
and $(\cdot)^\dagger$ denotes the Moore-Penrose generalized inverse. Because of the recursive structure of $\hat B(t)$ in \eqref{eq:empiricalest}, and the key fact that the estimator only changes values at observed event times, we may evaluate it forward in time, with initial value $\hat B_D(0) = \hat B_Z(0) = 0$. 

We also propose an estimator under the constant hazards difference model \eqref{eq:const}, where one may use
\begin{equation}\label{eq:constantest}
    \hat \beta_D = \int_0^\tau w(t)d\hat B_D(t), ~~~~\hat \beta_Z = \int_0^\tau w(t)d\hat B_Z(t),
\end{equation}
with $w(t) = \tilde w(t)/\int_0^\tau \tilde w(s)ds$, $\tilde w(t) = \E_n\{Y(t)\}$, and $\tau$ denoting time of end of study. Note that although all theorems below are built for $\hat B_D(t)$, they can be immediately translated for $\hat \beta_D$ by Slutsky's theorem and the functional Delta method, which we omit the details.

\begin{thm}\label{thm:cons}
Under model \eqref{eq:condstructuralinvalid}, Assumptions \ref{assump:ivrelevance} - \ref{assump:condindcensoring}, and regularity Assumptions \ref{assump:uniqsolu} - \ref{assump:ivbound} in the appendix, the estimator $(\hat B_D(t), \hat B_Z(t))$ is uniformly consistent for $B_D(t)$ on $[0, \tau]$, namely,
\begin{equation}
\sup_{t \in [0, \tau]} \left|(\hat B_D(t), \hat B_Z(t)) - (B_D(t), B_Z(t))\right| \to 0~~ \text{a.s.}.
\end{equation}
The normalized process 
\begin{equation}
    \sqrt{n}\{(\hat B_D(t), \hat B_Z(t)) - (B_D(t), B_Z(t))\}
\end{equation}
converges weakly to a two-dimensional zero-mean Gaussian process.
\end{thm}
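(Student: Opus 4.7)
The plan is to first identify the population moment conditions that $(\hat B_D, \hat B_Z)$ targets, then establish uniform consistency through a Volterra/Gronwall argument, and finally derive weak convergence via an i.i.d.\ linearization of the recursive estimating equation.

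\textbf{Identification.} Differentiating the integral equation \eqref{eq:empiricalest} gives the local form $\hat{\bbM}(t)\, d\hat B(t) = \E_n[\,\cdot\,]$, and I would verify that its population analog at the truth,
\begin{equation}
\E\!\left[\begin{pmatrix} Z^c \\ Z^c D(t)^c \end{pmatrix}\!\left\{e^{\int_0^{t-}D(u)dB_D(u) + ZdB_Z(u)}\,dN(t) - Y(t)[D(t)\,dB_D(t) + Z\,dB_Z(t)]\right\}\right] = 0,
\end{equation}
holds. The SCSM \eqref{eq:condstructuralinvalid} implies that the exponentially reweighted event increment has conditional mean $Y(t)[D(t)\,dB_D(t) + Z\,dB_Z(t)]$ given $\{T(\bar 0,0), Z, T \ge t\}$, which is the blip-unblipping identity extended to accommodate the $Z\,dB_Z$ term. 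The first coordinate then vanishes by IV independence (Assumption~\ref{assump:ivindependence}), and the second coordinate vanishes by the no-interaction condition (Assumption~\ref{assump:ivnointeraction}), which forces $\E[Z^c D(t)^c \mid T(\bar 0,0)] = 0$ and supplies the extra orthogonality needed to identify the additional parameter $B_Z$.

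\textbf{Uniform consistency.} A Glivenko--Cantelli argument over the (bounded variation) function classes involved, combined with the boundedness supplied by Assumption~\ref{assump:ivbound}, gives $\sup_{t \in [0,\tau]} \|\hat{\bbM}(t) - \bbM(t)\| \to 0$ almost surely, and IV relevance (Assumption~\ref{assump:ivrelevance}) ensures $\bbM(t)$ is uniformly nonsingular on $[0,\tau]$, so the Moore--Penrose inverse coincides with the ordinary inverse for all $n$ large. I would then express $\hat B(t) - B(t)$ as a Volterra integral equation driven by $(\E_n - \E)$, treating both $\hat{\bbM}^{-1}$ and the exponential factor as Lipschitz maps of the cumulative parameters on bounded sets. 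Gronwall's inequality applied on $[0,\tau]$ converts the uniform control of the driving empirical process into uniform almost sure convergence of $\hat B(t)$ to $B(t)$ under Assumption~\ref{assump:uniqsolu}.

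\textbf{Weak convergence.} A pathwise first-order Taylor expansion of the estimating equation around $(B_D, B_Z)$, legitimized by the consistency just obtained, yields
\begin{equation}
\sqrt{n}\{\hat B(t) - B(t)\} = n^{-1/2}\sum_{i=1}^{n} \psi_i(t) + o_p(1),
\end{equation}
where $\psi_i(t)$ is an i.i.d.\ influence function obtained by inverting the Volterra operator whose kernel involves $\bbM(s)^{-1}$ and the Hadamard derivative of the exponential in $B$. Finite-dimensional convergence to a zero-mean Gaussian vector follows from the multivariate central limit theorem, and tightness in $D[0,\tau]^2$ follows because each component of $\psi_i$ has uniformly bounded total variation, using standard functional CLT tools (e.g.\ Example~2.11.16 of van der Vaart and Wellner). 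The main obstacle throughout is the recursive/Volterra structure: errors in $\hat B$ at earlier times feed back multiplicatively through the exponential at later times, so both the Glivenko--Cantelli step and the linearization must be carried out uniformly on $[0,\tau]$ rather than pointwise, and one must prevent the generalized inverse $\hat{\bbM}^\dagger$ from degenerating anywhere along the path.
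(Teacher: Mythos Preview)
Your proposal tracks the paper's overall architecture (identification via moment conditions, empirical-process control, Volterra linearization for normality), but there is a genuine gap in the consistency step that the paper handles differently and that your outline does not address.

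You assert that IV relevance (Assumption~\ref{assump:ivrelevance}) ensures $\bbM(t)$ is uniformly nonsingular on $[0,\tau]$, so that the Moore--Penrose inverse reduces to the ordinary inverse. This is not correct under the paper's assumptions. The design matrix $\bbM(t)$ is $2\times 2$ and its second row/column involves $Z^c D(t)^c$; when there is little or no treatment switching by time $t$, $D(t)$ is essentially a function of $Z$ and $\bbM(t)$ degenerates. The paper explicitly allows this and replaces invertibility by Assumption~\ref{assump:uniqsolu} (the true $B$ is the continuous, minimum-$L_2$-norm solution of $B=\Upsilon(B,\cdot)$), which is what justifies the recursive closed form with $(\cdot)^\dagger$. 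Your Gronwall route relies on bounding $\hat\bbM(t)^{-1}$ uniformly, which you cannot do here; the Moore--Penrose inverse is not even continuous at rank-deficient points, so ``Lipschitz in the cumulative parameters'' fails exactly where it matters.

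A second, related gap is that a Gronwall argument for $\|\hat B - B\|_\infty$ needs an a~priori bound on $\hat B$ itself, since the feedback enters through $e^{\int_0^{s-} D\,d\hat B_D + Z\,d\hat B_Z}$. You never establish such a bound. The paper closes this by first analyzing a truncated estimator $\tilde B$ solving $B=\Upsilon_n(\xi(B),\cdot)$ with $\xi(y)=\mathrm{sgn}(y)\min(|y|,M)$, which is uniformly of bounded variation by construction; it then proves $\sup_{B\in\bbB,s}|\Upsilon_n(B,s)-\Upsilon(B,s)|\to 0$ via Glivenko--Cantelli, applies Helly's selection theorem plus a contradiction/subsequence argument to force $\tilde B\to B$ uniformly, and finally shows $\tilde B=\hat B$ for large $n$. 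Your Gronwall idea could be made to work only after inserting this truncation device and replacing invertibility by the minimum-norm characterization; as written it does not go through.

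Your weak-convergence sketch, by contrast, is essentially what the paper does: rewrite $\sqrt{n}(\hat B-B)$ as the solution of a Volterra equation (the paper integrates by parts to split the exponential's dependence on $B$ into two pieces, then solves via a product integral $\cF(s,t)$), obtain an i.i.d.\ representation $n^{-1/2}\sum_i\epsilon_i(t)$, and invoke a functional CLT. That part is fine.
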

We postpone the analytic variance estimate of the asymptotic Gaussian process to the appendix due to its complicated structure. Nonetheless, we have implemented our estimator together with its inferential tools into the R package named ``ivsacim'' \citep{ying2020ivsacim} freely available on R CRAN. Inferential tools include estimate of standard deviations, Z-values, P-values, a goodness-of-fit test for the constant hazards difference model \eqref{eq:const} ($H_0:~B_D(t) = \beta_Dt$ for all $t$) and as well as for the causal null hypothesis ($H_0:~B_D(t) \equiv 0$ for all t).

\section{Simulation}\label{sec:simu}
\noindent
\textbf{Aims: }To investigate the finite-sample performance of our proposed estimators $\hat B_D(t)$ in \eqref{eq:empiricalest} and $\hat \beta_D$ in \eqref{eq:constantest} which are of primary interest. 

\noindent
\textbf{Data-generating mechanisms: }
In order to investigate the finite sample performance  of our proposed methods, we conduct a simulation study in which we generate  $B = 1000$ data sets of i.i.d data with sample size $N =1600, 3200$. We generate a bivariate baseline variable $U$ which confounds the relationship between time-varying treatment and the time-to-event outcome:
\begin{equation}
U = (U_{1}, U_{2})^\top \sim \cN\left\{\begin{pmatrix}
3/2\\
3/2
\end{pmatrix},
\begin{pmatrix}
1/4 & -1/6\\
-1/6 &1/4
\end{pmatrix}\right\}.
\end{equation}
We simulate a scenario in which initial treatment assignment $Z$ is generated as an independent Bernoulli random variable with event probability $\P(Z = 1) = 0.5$. We also generate a potential treatment switching time $W$ for each individual according to 
\begin{equation}\label{eq:timetoswitch}
    \P(W > t|Z, U) = \exp(-0.5t) + Z\{1 - \exp(-0.05t)\} + (2Z - 1)\{1 - \exp(-0.05t -0.1 \cdot U_{1} t)\},
\end{equation}
and discretize it into a grid with step size $=0.1$. A subject experiences treatment $Z$ before $W$ and is switched to $1 - Z$ right after $W$. Thus, $Z$ and $W$ determine a patient's entire treatment process $\bar D$. By \eqref{eq:timetoswitch} we also allow both directions of treatment switching, which fits our application setting. The potential time to event $\tilde T(\bar d)$ is generated according to
\begin{align}
    \P\{\tilde T(\bar d, z) > t|U\} = \exp\left\{-0.1 \cdot t - 0.2 \cdot \int_0^td(s)ds - 0.1 \cdot \int_0^tzds  - 0.15 \cdot U_{2} \cdot t\right\},
\end{align}
and the observed time to event $T$ is thus generated via consistency with $\bar D$. Independent censoring was then generated with an overall rate of $18\%$. Treatment switching occurred at an approximate rate of 14\%. In the appendix, we confirm that under the proposed data generating mechanism, \eqref{eq:condstructuralinvalid} and Assumption \ref{assump:ivindependence}-\ref{assump:condindcensoring} hold. Indeed, one can show that \eqref{eq:condstructuralinvalid} holds as
\begin{align}
    \frac{\P\{\tilde T(\bar D(m), 0, Z) > t|\bar D(m), Z, \tilde T > m\}}{\P\{\tilde T(\bar D(m - 1), 0, z = 0) > t|\bar D(m), Z, \tilde T > m\}} = \exp\left\{- 0.2 \cdot \int_{m}^{t \wedge (m + 1)}D(s)ds - 0.1 \cdot \int_{m}^{t \wedge (m + 1)}Zds\right\}.
\end{align}

\noindent
\textbf{Estimands: }Our estimands are the SCSM treatment effect $B_D(t) = 0.2 t$ and the constant hazards difference effect $\beta_D = 0.2$.

\noindent
\textbf{Methods:} Each simulated dataset is analyzed using $\hat B_{D}(t)$ in \eqref{eq:empiricalest} and $\hat \beta_{D}$ in \eqref{eq:constantest}. As a comparison, we also fit the time-varying treatment effect estimator $\hat B_{D, \text{YTT}}(t)$ and the constant effect estimator $\hat \beta_{D, \text{YTT}}$ in \citet{ying2022structural} assuming \eqref{eq:ytt} and the exclusion restriction were to hold. $\hat B_{D, \text{YTT}}(t)$ and $\hat \beta_{D, \text{YTT}}$ are expected to be biased.

\noindent
\textbf{Performance measures: }We report bias, empirical standard errors (SEE), average estimated standard errors (SD), and coverage probabilities of 95\% confidence intervals of both $\hat B_{D}$ and $\hat B_{D, \text{YTT}}$ at time $t = 1, 2, 3$, also $\hat \beta_{D}$, $\hat \beta_{D, \text{YTT}}$.

Simulation results concerning $\hat B_{D}(t)$ \eqref{eq:empiricalest} and $\hat \beta_{D}$ \eqref{eq:constantest} are given in Table \ref{tab:invalidsimuresults}. Simulation results concerning $\hat B_{D, \text{YTT}}(t)$ \eqref{eq:empiricalest} and $\hat \beta_{D, \text{YTT}}$ \eqref{eq:constantest} are given in Table \ref{tab:validsimuresults}. The results confirm that the proposed estimator $\hat B_{D}(t)$ has small biases both at sample sizes 1600 and 3200 at $t = 1, 2, 3$. The estimated standard errors match Monte Carlo standard errors and overall 95\% confidence intervals attain the nominal levels among sample sizes 1600 and 3200 at $t = 1, 2, 3$. It is seen from the simulation that the estimator $\hat \beta_{D}$ is consistent and that the variability is well estimated, leading to satisfactory coverage probabilities. \citet{ying2022structural} has shown its method failed under a violation of exclusion restriction in its appendix. Here, our simulation once again confirmed this. In fact, $\hat B_{D, \text{YTT}}(t)$ is severely biased regardless of the sample size, especially when $t = 2, 3$. This is because more subjects switch their treatment and the direct effect of the initial treatment assignment starts to influence the estimator. The coverage rates of $\hat B_{D, \text{YTT}}(t)$ fail to attain the nominal level due to biases. The same situation applies to $\hat \beta_{D, \text{YTT}}$ as well.

\begin{table}[H]
\centering
\caption{Simulation results for the SCSM treatment effect model $\hat B_{D}(t)$ and the constant hazards difference model $\hat \beta_{D}$. Bias of $\hat B_{D}(t)$, empirical standard error, see($\hat B_{D}(t)$), average estimated standard error, sd($\hat B_{D}(t)$), and coverage probability of 95\% confidence intervals CP($\hat B_{D}(t)$), 95\% CP($\hat B_{D}(t)$), at time $t = 1, 2, 3$, bias of $\hat \beta_{D}$, empirical standard error, see($\hat \beta_{D}$), average estimated standard error, sd($\hat \beta_{D}$), and coverage probability of 95\% pointwise confidence intervals CP($\hat \beta_{D}$), 95\% CP($\hat \beta_{D}$), for sample size $N=1600, 3200$ and $R = 1000$ Monte Carlo samples.}
\label{tab:invalidsimuresults}
\resizebox{\columnwidth}{!}{%

\begin{tabular}{|c|l|l|l|l|l|l|}
\hline
  Sample Sizes           & \multicolumn{1}{l|}{}     & t = 1   & t = 2   & t = 3    & & \\ \hline
\multirow{4}{*}{N = 1600} 
& Bias($\hat B_{D}(t)$)    & -0.0186  & -0.0233  & -0.0167  & Bias($\hat \beta_{D}$) &-0.0058\\
& SEE($\hat B_{D}(t)$)     & 0.0918  & 0.1141  & 0.1734 & see($\hat \beta_{D}$) &0.0591\\ 
&SD($\hat B_{D}(t)$)       & 0.1125  & 0.1526  & 0.2223 & sd($\hat \beta_{D}$) &0.0738\\ 
&95\% CP($\hat B_{D}(t)$)  & 94.0  & 96.8  & 96.5 & 95\% CP($\hat \beta_{D}$) &95.3\\ \hline
\multirow{4}{*}{N = 3200} 
&Bias($\hat B_{D}(t)$)      & -0.0133 & -0.0140 & -0.0145 & Bias($\hat \beta_{D}$) &-0.0064\\
&SEE($\hat B_{D}(t)$)       &0.0623  & 0.0852  & 0.1331 & see($\hat \beta_{D}$) &0.0441\\ 
&SD($\hat B_{D}(t)$)        &0.0781  & 0.1051  & 0.1556 & sd($\hat \beta_{D}$) &0.0515\\ 
&95\% CP($\hat B_{D}(t)$)   &95.3  & 95.8  & 95.5 &95\% CP($\hat \beta_{D}$) &95.9 \\ \hline
\end{tabular}
}
\end{table}

\begin{table}[H]
\centering
\caption{Simulation results for the SCSM treatment effect model $\hat B_{D, \text{YTT}}(t)$ and the constant hazards difference model $\hat \beta_{D, \text{YTT}}$. Bias of $\hat B_{D, \text{YTT}}(t)$, empirical standard error, see($\hat B_{D, \text{YTT}}(t)$), average estimated standard error, sd($\hat B_{D, \text{YTT}}(t)$), and coverage probability of 95\% confidence intervals CP($\hat B_{D, \text{YTT}}(t)$), 95\% CP($\hat B_{D, \text{YTT}}(t)$), at time $t = 1, 2, 3$, bias of $\hat \beta_{D, \text{YTT}}$, empirical standard error, see($\hat \beta_{D, \text{YTT}}$), average estimated standard error, sd($\hat \beta_{D, \text{YTT}}$), and coverage probability of 95\% pointwise confidence intervals CP($\hat \beta_{D, \text{YTT}}$), 95\% CP($\hat \beta_{D, \text{YTT}}$), for sample size $N=1600, 3200$ and $R = 1000$ Monte Carlo samples.}
\label{tab:validsimuresults}
\resizebox{\columnwidth}{!}{%

\begin{tabular}{|c|l|l|l|l|l|l|}
\hline
  Sample Sizes           & \multicolumn{1}{l|}{}     & t = 1   & t = 2   & t = 3    & & \\ \hline
\multirow{4}{*}{N = 1600} 
& Bias($\hat B_{D, \text{YTT}}(t)$)    & 0.0543 & 0.1227 & 0.2185     & Bias($\hat \beta_{D, \text{YTT}}$) &0.0704\\
& SEE($\hat B_{D, \text{YTT}}(t)$)     &0.0385  & 0.0755  & 0.1439    & see($\hat \beta_{D, \text{YTT}}$) &0.0387\\ 
&SD($\hat B_{D, \text{YTT}}(t)$)       & 0.0435  & 0.0861 & 0.1687   & sd($\hat \beta_{D, \text{YTT}}$) &0.0451\\ 
&95\% CP($\hat B_{D, \text{YTT}}(t)$)  & 75.9  & 71.9  & 80.0         & 95\% CP($\hat \beta_{D, \text{YTT}}$) &69.2\\ \hline
\multirow{4}{*}{N = 3200} 
&Bias($\hat B_{D, \text{YTT}}(t)$)      & 0.0546 & 0.1265 & 0.2241     & Bias($\hat \beta_{D, \text{YTT}}$) &0.0723\\
&SEE($\hat B_{D, \text{YTT}}(t)$)       &0.0261  & 0.0519  & 0.0970   & see($\hat \beta_{D, \text{YTT}}$) &0.0271\\ 
&SD($\hat B_{D, \text{YTT}}(t)$)        &0.0308  & 0.0608  & 0.1187    & sd($\hat \beta_{D, \text{YTT}}$) &0.0317\\ 
&95\% CP($\hat B_{D, \text{YTT}}(t)$)   &58.7  & 43.5 & 54.3         &95\% CP($\hat \beta_{D, \text{YTT}}$) &36.1 \\ \hline
\end{tabular}
}
\end{table}

\section{Real Data Application}\label{sec:real}
We aim to reanalyze the treatment effect on a safety outcome, the time to first severe or worse sign or symptom, in the Optimized Treatment That Includes or Omits NRTIs trial as \citet{ying2022structural} did. An introduction of the analysis was given in Section \ref{sec:intro}. A summary of events and treatment switching in the study can be found in Table \ref{tab:outcomesummary}. \citet{ying2022structural} investigated the treatment effect under their SCSM leveraging initial randomization as an IV, assuming the exclusion restriction. They found a significant time-varying effect of NRTIs on the safety outcome, which revealed possible safety concerns of NRTIs. However, a caveat here is that the OPTIONS trial was open-label, that is, site investigators and participants knew the treatment assignments. This might lead to the violation of the exclusion restriction, thus invalidating their analysis. Therefore, we would like to reanalyze the OPTIONS trial without assuming the exclusion restriction.

\begin{table}[H]
\centering
\caption{A summary of the safety outcome (first severe or worse sign or symptom) and NRTI assignment change (treatment switching) in the OPTIONS trial, compared between treatment groups. \label{tab:outcomesummary}}{
\begin{tabular}{|l|ll|l|}
\hline
\multirow{2}{*}{}                                                                                  & \multicolumn{2}{l|}{Patients, n (\%)}                                                                                                                 & \multirow{2}{*}{\begin{tabular}[c]{@{}l@{}}Difference (95\% CI),\\ percentage points\end{tabular}} \\ \cline{2-3}
                                                                                                   & \multicolumn{1}{l|}{\begin{tabular}[c]{@{}l@{}}Add NRTIs\\  (n = 180)\end{tabular}} & \begin{tabular}[c]{@{}l@{}}Omit NRTIs\\  (n = 177)\end{tabular} &                                                                                                    \\ \hline
First severe or worse sign or symptom                                                              & \multicolumn{1}{l|}{51 (28.3)}                                                      & 35 (19.8)                                                       & 8.6 (-0.8 to 17.9)                                                                                 \\ \hline
Change in NRTI assignment                                                                          & \multicolumn{1}{l|}{10 (5.3)}                                                       & 17 (9.5)                                                        & -4.0 (-10.1 to 2.0)                                                                                \\ \hline
\begin{tabular}[c]{@{}l@{}}Change in NRTI assignment \\ before safety outcome failure\end{tabular} & \multicolumn{1}{l|}{8 (3.4)}                                                        & 9 (2.8)                                                         & -0.6 (-4.4 to 5.4)                                                                                 \\ \hline
\end{tabular}
}
\end{table}

Using the proposed approach to formally account for treatment switching by leveraging randomized treatment assignment as an IV for treatment actually received which is likely confounded by unmeasured factors, we performed a test of the sharp null hypothesis of no individual causal effect, i.e. $B_D(t) = 0$, against which we found nonsignificant statistical evidence, P-value 0.086. Our approach also delivered a nonparametric estimator $\hat B_D(t)$ along with 95\% pointwise confidence bands displayed at the top left in Figure \ref{fig:time2firstsafety}. From the figure, we observe a hazard rate for experiencing the safety outcome severe/worse sign/symptom over time in the add-NRTI group compared to the omit-group. Under a constant hazards difference model \eqref{eq:const}, our approach estimated a hazards difference of -0.0537 (-0.120, 0.012), P-value 0.109, though there is significant evidence of a time-varying effect, indicated by our goodness-of-fit test rejecting the constant effect model (P-value 0.039). The test of the null hypothesis of no direct causal effect of the randomization, i.e. $B_Z(t) = 0$, reports a nonsignificant P-value 0.08. A nonparametric estimator $\hat B_Z(t)$ along with 95\% pointwise confidence bands are displayed at the top right in Figure \ref{fig:time2firstsafety}.


\begin{figure}[H]
\centering
\includegraphics[scale = 0.4]{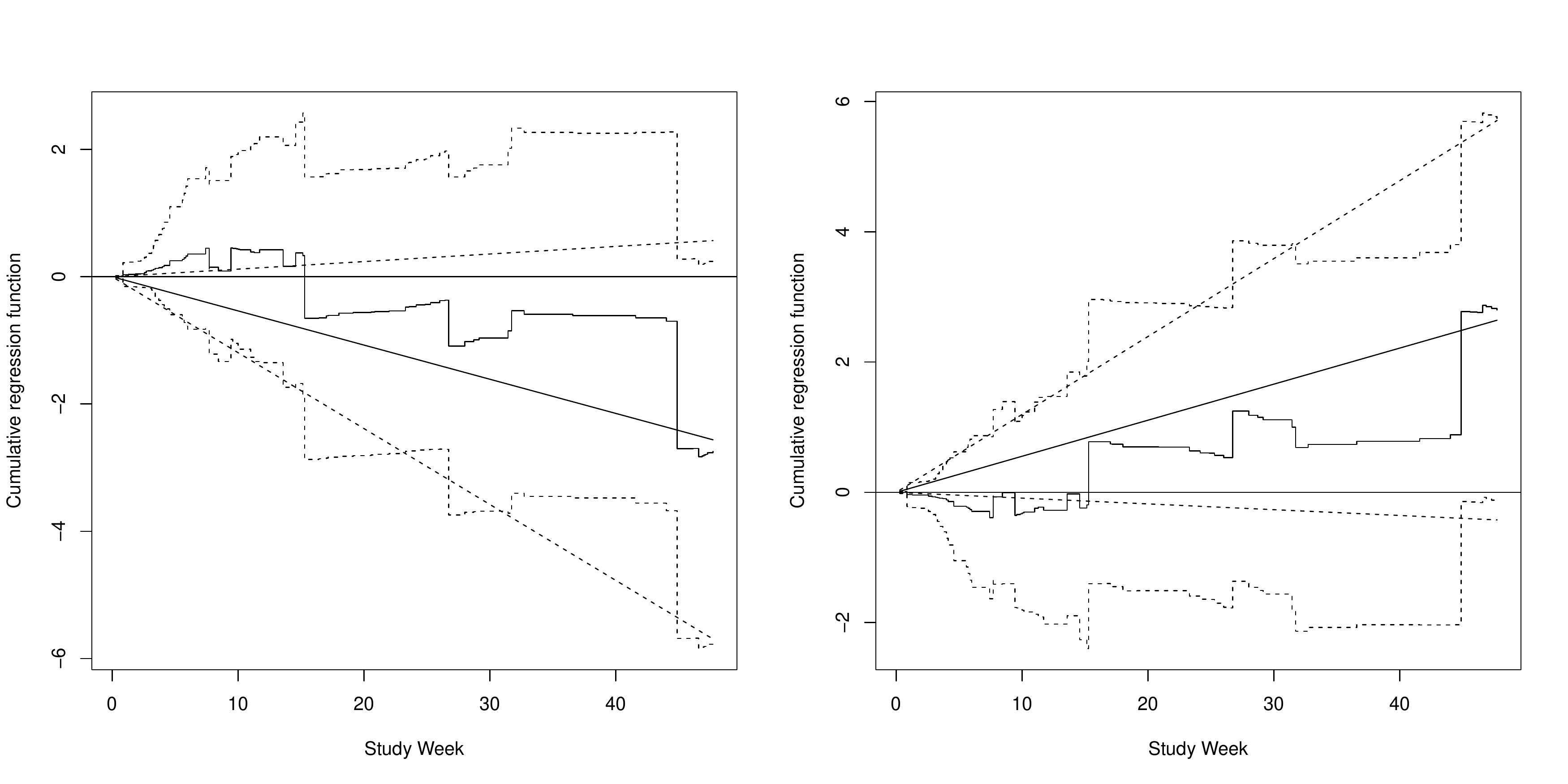}
\includegraphics[scale = 0.32]{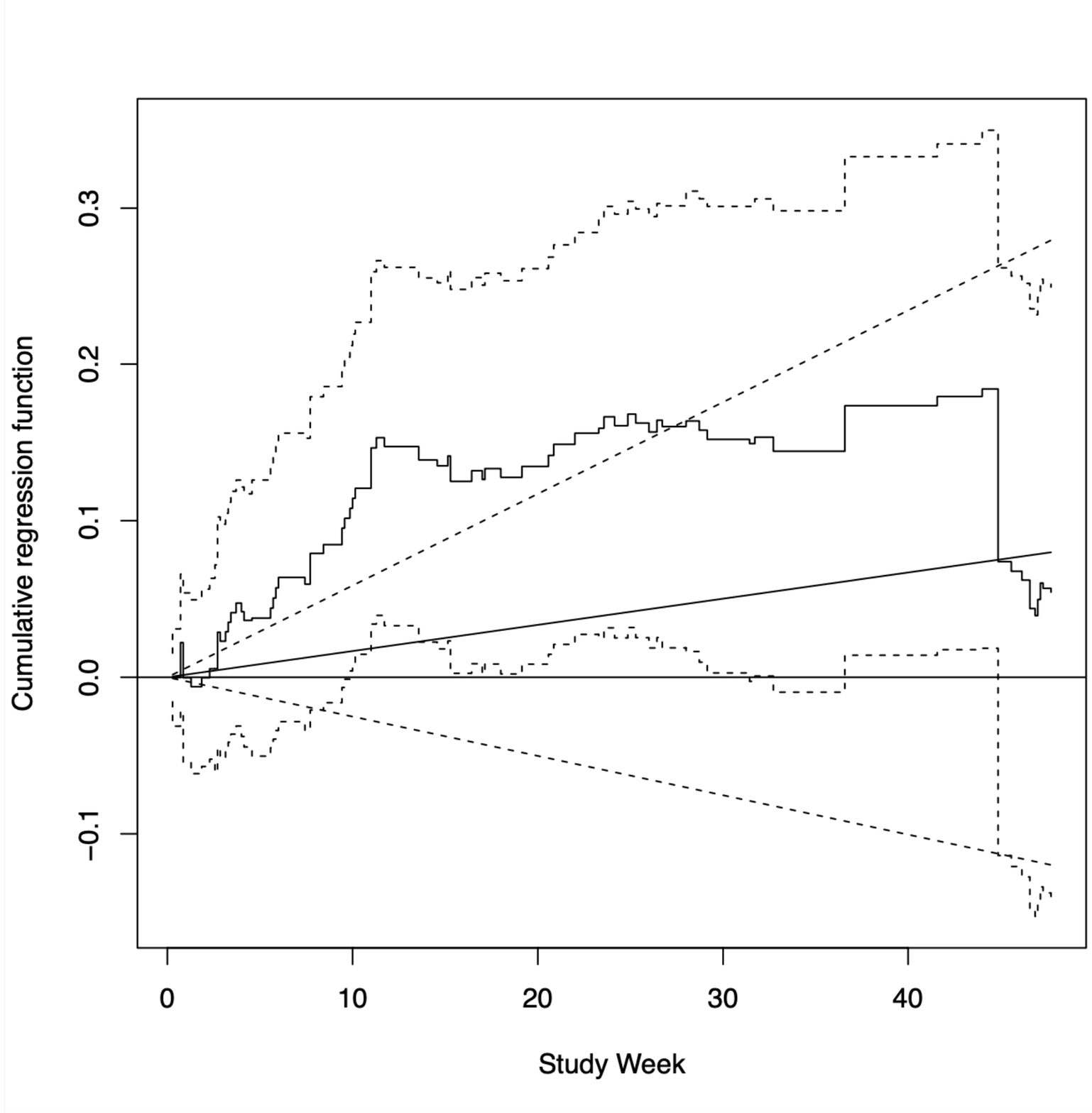}
\caption{Top left figure: The step line corresponds to the estimated SCSM treatment effect of NRTIs, $\hat B_D(t)$, on the safety outcome (time to first severe or worse sign or symptom), along with 95\% pointwise confidence bands. The y-axis corresponds to the estimated SCSM treatment effect at time t, namely, $\hat B_D(t)$. The curved line corresponds to the constant hazards difference estimator $\hat \beta_D$. Top right figure: The step line corresponds to the estimated SCSM direct effect of initial treatment assignment, $\hat B_Z(t)$, on the safety outcome, along with 95\% pointwise confidence bands. The y-axis corresponds to the estimated SCSM treatment effect at time t, namely, $\hat B_Z(t)$. The curved line corresponds to the constant hazards difference estimator $\hat \beta_Z$. Bottom figure: The step line corresponds to the estimated SCSM total effect of treatment assignment, $\hat B_D(t) + \hat B_Z(t)$, on the safety outcome, along with 95\% pointwise confidence bands. The y-axis corresponds to the estimated SCSM treatment effect at time t, namely, $\hat B_D(t) + \hat B_Z(t)$. The curved line corresponds to the constant hazards difference estimator $\hat \beta_D + \hat \beta_Z$. }
\label{fig:time2firstsafety}
\end{figure}

\citet{ying2022structural} uncovered possible safety concerns of NRTIs, that is, adding NRTIs may increase the risk of severe or worse sign or symptom for HIV-infected patients. Our analysis reveals that such safety concern is possibly caused by the effect of unblinded treatment assignment. Therefore one possible explanation is that knowing treatment assignments had possibly led to a change of treatment courses by the clinicians or health-seeking behavior of treatment-aware patients, which resulted in worse sign or symptom. We also append $\hat B_D(t) + \hat B_Z(t)$ along with 95\% pointwise confidence bands displayed at the bottom in Figure \ref{fig:time2firstsafety}, which indeed shows a similar trend as the real data application in \citet{ying2022structural}. One possible explanation is that since \citet{ying2022structural} cannot distinguish the effect of $D(t)$ and $Z$ but can only capture the the total effect of treatment actually received and assigned $B_D(t) + B_Z(t)$ when the no-current treatment value interaction condition \eqref{eq:nocurrenttrtinteraction} holds.



\section*{Acknowledgement}
The author would like to thank Eric J. Tchetgen Tchetgen for his valuable suggestions for improving the manuscript. The author would also like to thank Dr. Diana Ventura from the Center for Biostatistics in AIDS Research at T.H. Chan School of Public Health at Harvard University for providing us with the raw data for this manuscript. \vspace*{-8pt}

\bibliographystyle{agsm}
\bibliography{ref}

\newpage
\appendix

\section{Identification}
Indeed, we can show
\begin{prp}\label{prp:iden}
Under model \eqref{eq:condstructuralinvalid} and Assumptions \ref{assump:ivrelevance}- \ref{assump:condindcensoring}, the parameter of interest $(B_D(t), B_Z(t))^\top$ solves the following two sequences of population-level estimating equations at each time $t$,
\begin{align}
    \E\left[
    \begin{pmatrix}
        Z^c\\
        Z^cD(t)^c
    \end{pmatrix}
    e^{\int_0^{t-}D(s)dB_D(s) + ZdB_Z(s)}Y(t)\left\{dN(t) - D(t)dB_D(t) - ZdB_Z(t)\right\}\right] = 0,\label{eq:unidenee}
\end{align}
where $Z^c = Z - \E(Z|L)$ and $D(t)^c = D(t) - \E\{D(t)|Z, L\}$. 
\end{prp}
However, the above Equation \eqref{eq:unidenee} cannot identify $(B_D(t), B_Z(t))^\top$ because the design matrix
\begin{align}
    &\bbM(t)\\
    &= \E \left[
    \begin{pmatrix}
    Z^cD(t) & Z^cD(t)^cD(t)\\
    Z^cZ &Z^cD(t)^cZ
    \end{pmatrix}
    Y(s)\exp\left\{\int_0^{t-}D(u)dB_D(u) + ZdB_Z(u)\right\}\right].\label{eq:identification}
\end{align}
might be degenerate at some time $t$. Especially, consider the case when there is indeed no treatment switching, then \eqref{eq:identification} becomes
\begin{align}
    \bbM(t)= \E \left[
    \begin{pmatrix}
    Z^cZ & Z^cZ^cZ\\
    Z^cZ &Z^cZ^cZ
    \end{pmatrix}
    Y(s)\exp\left\{\int_0^{t-}D(u)dB_D(u) + ZdB_Z(u)\right\}\right],
\end{align}
which is degenerate. Therefore intuitively the uniqueness of the solutions to \eqref{eq:identification} necessitates enough rates of treatment switching across $t \in [0, \tau]$. This kind of assumption can be awkward in practice. Therefore to avoid such an assumption, one may leverage the Moore-Penrose generalized inverse that is commonly adopted in survival analysis literature \citep{martinussen2006dynamic} and impose the following weaker assumption
\begin{assump}\label{assump:uniqsolu}
$(B_D(t), B_Z(t))^\top$ is the continuous solution to the equation $B(t) = \Upsilon(B, t)$ with minimum $L_2$ norm at each $t$.
\end{assump}
The validity of this regularity assumption is beyond the scope of this statistical paper but can be of interest for researchers on differential equations. With this and Proposition \ref{prp:iden}, one can show that $(B_D(t), B_Z(t))^\top$ admits a closed form solution
\begin{equation}\label{eq:solution}
\begin{pmatrix}
B_D(t)\\
B_Z(t)
\end{pmatrix}
= \int_0^t\{\bbM(s)\}^\dagger \E \left[
\begin{pmatrix}
Z^c\\
Z^cD(s)^c
\end{pmatrix}\exp\left\{\int_0^{s-}D(u)dB_D(u) + ZdB_Z(u)\right\}dN(s)\right],
\end{equation}
where $(\cdot)^\dagger$ is the Moore-Penrose generalized inverse.

Note that since \eqref{eq:const} is a submodel of \eqref{eq:condstructuralinvalid}, this result carries over to that for $(\beta_D, \beta_Z)^\top$ under \eqref{eq:condstructuralinvalid}.

\section{Proofs}

\subsection{Proof of Proposition \ref{prp:iden}}
We write $\tilde N(t) = \mathbbm{1}(T \le t)$ the counting process of event time. It is straightforward that \eqref{eq:condstructuralinvalid} implies
\begin{align}
    &\E(d\tilde N_{\bar D(m - 1), 0, z = 0}(t)|\bar D(m), Z, T_{\bar D(m - 1), 0, z = 0} \geq t)\\
    &= \E(d\tilde N_{\bar D(m), 0, Z}(t)|\bar D(m), Z, T_{\bar D(m), 0} \geq t) \\
    &~- D(m)\mathbbm{1}(m \leq t < m + 1) dB_D(t) - Z\mathbbm{1}(m \leq t < m + 1) dB_Z(t) \label{eq:countingdiff}.
\end{align}
For any time $t$ satisfying $m \le t < m + 1$, under \eqref{eq:countingdiff} and Assumption \ref{assump:condindcensoring},
\begin{eqnarray}
&&\E\Bigg[\begin{pmatrix}
Z^c\\
Z^cD(s)^c
\end{pmatrix}
e^{\int_0^{t-} D(s)dB_D(s) + ZdB_Z(s)}Y(t)\Big\{dN(t) - D(m)dB_D(t) - ZdB_Z(t)\Big\}\Bigg]\\
&=&\E\Bigg[\begin{pmatrix}
Z^c\\
Z^cD(s)^c
\end{pmatrix}
e^{\int_0^{t-} D(s)dB_D(s) + ZdB_Z(s)}Y(t)\Big\{d\tilde N_{\bar D(m), 0}(t) - D(m)dB_D(t) - ZdB_Z(t)\Big\}\Bigg]\label{eq:populationestieq}\\
&=&\E\Bigg[\begin{pmatrix}
Z^c\\
Z^cD(s)^c
\end{pmatrix}
e^{\int_0^{t-} D(s)dB_D(s) + ZdB_Z(s)}Y(t)\Big\{\E(d\tilde N_{\bar D(m), 0}(t)|\bar D(m), Z, \tilde T_{\bar D(m), 0} \geq t)\\
&& - D(m)dB_D(t) - ZdB_Z(t)\Big\}\Bigg]\\
&=&\E\Bigg[\begin{pmatrix}
Z^c\\
Z^cD(s)^c
\end{pmatrix}
e^{\int_0^{m-} D(s)dB_D(s) + ZdB_Z(s)}\frac{\mathbbm{1}(\tilde T(\bar D(m), 0) \ge t)}{e^{-\int_{m}^{t-} D(s)dB_D(s) + ZdB_Z(s)}}\mathbbm{1}(C \ge t)\\
&&~~~\cdot\Big[\E(d\tilde N_{\bar D(m), 0}(t)|\bar D(m), Z, \tilde T_{\bar D(m), 0} \geq t) - D(m)dB_D(t)- ZdB_Z(t)\Big]\Bigg]\\
&=&\E\Bigg[\begin{pmatrix}
Z^c\\
Z^cD(s)^c
\end{pmatrix}
e^{\int_0^{m-} D(s)dB_D(s) + ZdB_Z(s)}\\
&&\cdot\P(\tilde T(\bar D(m - 1), 0) \ge t|\bar D(m), Z, \tilde T_{\bar D(m), 0} \geq t)\mathbbm{1}(C \ge t)\\
&&~~~\cdot\Big[\E(d\tilde N_{\bar D(m), 0}(t)|\bar D(m), Z, \tilde T_{\bar D(m), 0} \geq t) - D(m)dB_D(t)- ZdB_Z(t)\Big]\Bigg]\\
&=&\E\Bigg[\begin{pmatrix}
Z^c\\
Z^cD(s)^c
\end{pmatrix}e^{\int_0^{m-} D(s)dB_D(s) + ZdB_Z(s)}Y_{\bar D(m - 1), 0, z= 0}(t)\mathbbm{1}(C \ge t)\\
&&~~~\cdot\Big[\E(d\tilde N_{\bar D(m), 0}(t)|\bar D(m), Z, \tilde T_{\bar D(m), 0} \geq t) - D(m)dB_D(t)- ZdB_Z(t)\Big]\Bigg]\\
&=&\E\Bigg[\begin{pmatrix}
Z^c\\
Z^cD(s)^c
\end{pmatrix}e^{\int_0^{m-} D(s)dB_D(s) + ZdB_Z(s)}Y_{\bar D(m - 1), 0, z= 0}(t)\mathbbm{1}(C \ge t)\\
&&~~~\cdot\Big[\E(d\tilde N_{\bar D(m - 1), 0, z= 0}(t)|\bar D(m), Z, \tilde T_{\bar D(m - 1), 0, z= 0} \geq t)]\Bigg]\\
&=&\E\Bigg[\begin{pmatrix}
Z^c\\
Z^cD(s)^c
\end{pmatrix}e^{\int_0^{m-} D(s)dB_D(s) + ZdB_Z(s)}Y_{\bar D(m - 1), 0, z= 0}(t)\mathbbm{1}(C \ge t)d\tilde N_{\bar D(m - 1), 0, z= 0}(t)\Bigg].
\end{eqnarray}
Now we can repeat these steps in order to blip down the effect of treatment $D(m - 1)$ at $m - 1$. However, note that for $m - 1 < t \leq m$, this effect is null by assumption, and $\int_0^{m-}D(s)dB_D(s)$ is a function of $\bar D(m - 1)$. It follows that
\begin{eqnarray}
&&\E\Bigg[\begin{pmatrix}
Z^c\\
Z^cD(s)^c
\end{pmatrix}e^{\int_0^{m-} D(s)dB_D(s) + ZdB_Z(s)}Y_{\bar D(m - 1), 0, z = 0}(t)\mathbbm{1}(C \ge t)d\tilde N_{\bar D(m - 1), 0, z= 0}(t)\Bigg]\\
&&\E\Bigg[\begin{pmatrix}
Z^c\\
Z^cD(s)^c
\end{pmatrix}e^{\int_0^{m-} D(s)dB_D(s) + ZdB_Z(s)}Y_{\bar D(m - 1), 0, z = 0}(t)\mathbbm{1}(C \ge t)\\
&&\cdot\E(d\tilde N_{\bar D(m - 1), 0, z= 0}(t)|\bar D(m - 1), Z, \tilde T_{\bar D(m - 1), 0, z = 0} \geq t)\Bigg]\\
&&\E\Bigg[\begin{pmatrix}
Z^c\\
Z^cD(s)^c
\end{pmatrix}e^{\int_0^{(m - 1)-} D(s)dB_D(s) + ZdB_Z(s)}\\
&&\cdot\frac{\P(T_{\bar D(m - 1), 0, z= 0} \geq t|\bar D(m - 1), Z, \tilde T_{\bar D(m - 1), 0, z = 0} \geq t)}{e^{-\int_{m - 1}^{t\wedge (m-)}D(s)dB_D(s)}}\mathbbm{1}(C \ge t)\\
&&~~\cdot\E(d\tilde N_{\bar D(m - 1), 0, z= 0}(t)|\bar D(m - 1), Z, \tilde T_{\bar D(m - 1), 0, z = 0} \geq t)\Bigg]\\
&&\E\Bigg[\begin{pmatrix}
Z^c\\
Z^cD(s)^c
\end{pmatrix}e^{\int_0^{(m - 1)-} D(s)dB_D(s) + ZdB_Z(s)}\\
&&\P(T_{\bar D(t_{m - 2}), 0} \geq t|\bar D(m - 1), Z, \tilde T_{\bar D(m - 1), 0, z = 0} \geq t)\mathbbm{1}(C \ge t)\\
&&~~\cdot\E(d\tilde N_{\bar D(m - 1), 0, z= 0}(t)|\bar D(m - 1), Z, \tilde T_{\bar D(m - 1), 0, z = 0} \geq t)\Bigg]\\
&=&\E\Bigg[\begin{pmatrix}
Z^c\\
Z^cD(s)^c
\end{pmatrix}e^{\int_0^{(m - 1)-} D(s)dB_D(s) + ZdB_Z(s)}Y_{\bar D(t_{m - 2}), 0, z = 0}(t)\mathbbm{1}(C \ge t)\\
&&\E(d\tilde N_{\bar D(m - 1), 0, z= 0}(t)|\bar D(m - 1), Z, \tilde T_{\bar D(m - 1), 0, z = 0} \geq t)\Bigg]\\
&=&\E\Bigg[\begin{pmatrix}
Z^c\\
Z^cD(s)^c
\end{pmatrix}e^{\int_0^{(m - 1)-} D(s)dB_D(s) + ZdB_Z(s)}Y_{\bar D(t_{m - 2}), 0, z = 0}(t)\mathbbm{1}(C \ge t)\\
&&\E(d\tilde N_{\bar D(m - 2), 0, z = 0}(t)|\bar D(m - 1), Z, \tilde T_{\bar D(t_{m - 2}), 0, z = 0} \geq t)\Bigg]\\
&=&\E\Bigg[\begin{pmatrix}
Z^c\\
Z^cD(s)^c
\end{pmatrix}e^{\int_0^{(m - 1)-} D(s)dB_D(s) + ZdB_Z(s)}Y_{\bar D(t_{m - 2}), 0, z = 0}(t)\mathbbm{1}(C \ge t)d\tilde N_{\bar D(t_{m - 2}), 0}(t)\Bigg].
\end{eqnarray} 
A recursive application of the above argument yields
\begin{eqnarray}
&&\E\Bigg[\begin{pmatrix}
Z^c\\
Z^cD(s)^c
\end{pmatrix}\exp\Bigg(\int_0^{t-} D(s)dB_D(s)\Bigg)\Big[dN_{\bar D(m), 0}(t) - Y(t)D(m)dB_D(t)\Big]\Bigg]\\
&=&\E\Bigg[\begin{pmatrix}
Z^c\\
Z^cD(s)^c
\end{pmatrix}Y_{0, z= 0}(t)\mathbbm{1}(C \ge t)dN_{0}(t)\Bigg]\\
&=&\E\Bigg[\begin{pmatrix}
Z^c\\
Z^cD(s)^c
\end{pmatrix}Y_{0, z= 0}(t)dN_{0, z= 0}(t)\P(C \ge t|L)\Bigg]\\
&=&\E\Bigg[\begin{pmatrix}
Z^cY_0(t)dN_{0, z = 0}(t)\P(C \ge t|L)\\
Z^c\alpha_2(t; \tilde T(0, z = 0))Y_0(t)dN_{0, z = 0}(t)\P(C \ge t|L)
\end{pmatrix}\Bigg]= 0,
\end{eqnarray}
by the IV independence assumption. Consequently, \eqref{eq:identification} holds.
\qed

We introduce additional notation, some technical assumptions, and a necessary Helly's selection theorem to prove Theorem \ref{thm:cons}. For the proofs below, we plug in the truth $\E(Z|L)$ and $\E(D(t)|Z, L)$ into the estimators. We prove their uniform consistency and asymptotic normality. For two-step estimators given in the main text with regular and asymptotically linear estimates $\hat \E(Z|L)$ and $\hat \E(D(t)|Z, L)$, a Taylor expansion can be added into the proofs.

To facilitate exposition, we write $B(t) = (B_D(t), B_Z(t))^\top$. Define 
\begin{equation}
\Upsilon(B, t) = \int_0^t [\bbM\{B, s)\}]^\dagger\E\left\{ 
\begin{pmatrix}
Z^c\\
Z^cD(s)^c
\end{pmatrix}e^{\int_0^{s-}(D(u), Z)dB(u)}dN(s)\right\}.
\end{equation}
Note that $B(t)$ is the solution to $B(t) = \Upsilon(B, t)$ by results in the last section.

We write $\lVert g \rVert_\infty = \sup_{t \in [0, \tau]} |g(t)|$ and use $\cV(g)$ to denote the total variation of $g$ over the interval $[0, \tau]$. Let $M^\circ = \lVert B(t) \rVert_\infty < \infty$. Define 
\begin{equation}
    \hat \bbM(B, s) = \E_n \bigg\{
    \begin{pmatrix}
    Z^cD(s) & Z^cD(s)^cD(s)\\
    Z^cZ &Z^cD(s)^cZ
    \end{pmatrix}
    Y(s)e^{\int_0^{s-}\{D(u), Z\}dB(u)}\bigg\},
\end{equation}
\begin{equation}
    \Upsilon_n(B, t) = \int_0^t \{\hat \bbM(B, s)\}^\dagger \E_n\left\{\begin{pmatrix}
Z^c\\
Z^cD(s)^c
\end{pmatrix}e^{\int_0^{s-}\{D(u), Z\}dB(u)}dN(s)\right\},
\end{equation}
and
\begin{equation}
\bar \Upsilon_n(B, t) = \int_0^t \{\bbM(B, s)\}^\dagger \E_n\left\{
\begin{pmatrix}
Z^c\\
Z^cD(s)^c
\end{pmatrix}e^{\int_0^{s-}\{D(u), Z\}dB(u)}dN(s)\right\}.
\end{equation}
We note for later reference that for any two functions $B_1$ and $B_2$, 
\begin{equation}\label{eq:gammaupperdiff}
    \lVert \Upsilon(\xi(B_1), t) - \Upsilon(\xi(B_2), t)\rVert_\infty \le 4\tau M_{\text{max}}\exp(M)/\nu \lVert B_1 - B_2\rVert_\infty.
\end{equation}


The following regularity condition is trivially satisfied for binary $D(t)$ and $Z$. 
\begin{assump}\label{assump:ivbound}
The instrument $Z$ and the treatment process $D(t)$ is uniformly bounded by $M_{\text{max}}$.
\end{assump}
Define
\begin{equation}
    \bbM(B(\cdot), t) = \E \left\{
    \begin{pmatrix}
    Z^cD(t) & Z^cD(t)^cD(t)\\
    Z^cZ &Z^cD(t)^cZ
    \end{pmatrix}
    Y(t)e^{\int_0^{t-}(D(s), Z)dB(s)}D(t)\right\},
\end{equation}
for any $B \in \bbB$, where $\bbB$ is the set of two-dimensional functions on $[0, \tau]$ that have total variations bounded by some $M > M^\circ$. 

We shall use Helly's selection theorem to establish Theorem \ref{thm:cons}.
\begin{thm}[Helly's Selection Theorem]
Let $\{f_n\}$ be a sequence of functions on $[0, \tau]$ such that $\lVert f_n \rVert_\infty \le  A_1$ and $\sup_n\cV(f_n) \le A_2$, where $A_1$ and $A_2$ are finite constants. Then
\begin{enumerate}
    \item There exists a subsequence $\{f_{n_j}\}$ of $\{f_n\}$ which converges pointwise to some function $f$.
    \item If $f$ is continuous, the convergence is uniform.
\end{enumerate}
\end{thm}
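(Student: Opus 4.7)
The plan is to prove part 1 by reducing to sequences of monotone functions via Jordan decomposition followed by a Cantor diagonal extraction on a countable dense set, and to obtain part 2 via a Dini-type argument applied to the monotone parts.

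First I would introduce $V_n(t) = \cV_{[0,t]}(f_n)$ and write
$$
g_n(t) = \tfrac{1}{2}\{V_n(t) + f_n(t)\}, \qquad h_n(t) = \tfrac{1}{2}\{V_n(t) - f_n(t)\},
$$
so that $f_n = g_n - h_n$ with both $g_n, h_n$ nondecreasing and uniformly bounded by $(A_1 + A_2)/2$. This reduces part 1 to extracting a pointwise convergent subsequence from a uniformly bounded sequence of nondecreasing functions. For that, I would fix a countable dense set $Q \subset [0, \tau]$ containing the endpoints; for each $q \in Q$ the real sequence $\{g_n(q)\}$ is bounded, so Bolzano--Weierstrass combined with the standard Cantor diagonal procedure yields a subsequence $g_{n_j}$ with $g_{n_j}(q) \to \tilde g(q)$ for every $q \in Q$, where $\tilde g$ is nondecreasing on $Q$. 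Extend to $[0,\tau]$ by $\tilde g(t) := \inf\{\tilde g(q) : q \in Q,\, q > t\}$. A monotone sandwich, bracketing $t$ by $q_- < t < q_+$ in $Q$, passing $j \to \infty$ to get $\tilde g(q_-) \le \liminf g_{n_j}(t) \le \limsup g_{n_j}(t) \le \tilde g(q_+)$, and then letting $q_\pm \to t$, delivers $g_{n_j}(t) \to \tilde g(t)$ at every continuity point of $\tilde g$; since $\tilde g$ has at most countably many jumps, one further diagonalization at those points secures pointwise convergence on all of $[0,\tau]$. Applying the identical procedure to $h_{n_j}$ along the extracted subsequence produces $\tilde h$, and setting $f := \tilde g - \tilde h$ yields part 1, with $\cV(f) \le A_2$ by lower semicontinuity of total variation under pointwise convergence.

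For part 2 I would argue Dini-style on each monotone component: if $\tilde g$ is continuous on the compact interval $[0,\tau]$, then uniform continuity supplies a partition $0 = t_0 < \cdots < t_k = \tau$ with $\tilde g(t_i) - \tilde g(t_{i-1}) < \epsilon/3$; for $j$ large we have $|g_{n_j}(t_i) - \tilde g(t_i)| < \epsilon/3$ at all partition points, and the monotone sandwich $g_{n_j}(t_{i-1}) \le g_{n_j}(t) \le g_{n_j}(t_i)$ on each subinterval, combined with the analogous bracket for $\tilde g(t)$, yields $\lVert g_{n_j} - \tilde g \rVert_\infty < \epsilon$. The same reasoning handles $h_{n_j}$, and the triangle inequality lifts this to $\lVert f_{n_j} - f\rVert_\infty \to 0$. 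The main obstacle I expect is justifying that the Jordan pieces $\tilde g$ and $\tilde h$ are individually continuous when only their difference $f$ is known to be so, since in principle a jump in $\tilde g$ could cancel a matching jump in $\tilde h$. This obstacle is benign in the setting where the lemma is invoked later in the paper, because the pre-limit $f_n$ are obtained by integrating bounded integrands against $dN$ and so the associated total variation functions $V_{n_j}$ also converge to a continuous limit; that inheritance of continuity is the one place the full continuity hypothesis of the theorem is felt.
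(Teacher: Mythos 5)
The paper does not actually prove this statement---it is quoted as a classical result and used as a black box in Step 3 of the proof of Theorem \ref{thm:cons}---so the only meaningful comparison is with the standard argument. Your proof of part 1 is that standard argument and is correct: Jordan decomposition $f_n=g_n-h_n$ into uniformly bounded nondecreasing pieces, Cantor diagonalization over a countable dense set, the monotone sandwich at continuity points of the limit, and one further extraction over the at most countably many discontinuities.

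Part 2 is where there is a genuine gap, and the obstacle you flag at the end is not benign---it is fatal to the statement as written. Continuity of $f=\tilde g-\tilde h$ does not imply continuity of $\tilde g$ and $\tilde h$ separately, and without that the Dini-type argument on each monotone component never gets started. In fact, part 2 is false for general sequences of uniformly bounded variation: on $[0,1]$ take $f_n(t)=\mathbbm{1}\{1/(2n)\le t<1/n\}$, so that $\lVert f_n\rVert_\infty=1$ and $\cV(f_n)=2$; then $f_n\to 0$ pointwise (as does every subsequence), the limit is continuous, yet $\lVert f_n-0\rVert_\infty=1$ for every $n$. In this example the Jordan pieces converge to $\tilde g(t)=\tilde h(t)=\mathbbm{1}(t>0)$, which is exactly the jump cancellation you anticipated. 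Consequently no proof of part 2 can exist at this level of generality; one needs an additional hypothesis, e.g.\ that the variation functions $V_{n_j}(\cdot)=\cV_{[0,\cdot]}(f_{n_j})$ also converge pointwise to a continuous limit (which restores continuity of $\tilde g$ and $\tilde h$ and lets your argument go through), or that the $f_{n_j}$ are equicontinuous, in which case pointwise convergence on a compact interval upgrades to uniform convergence directly. The latter is what the paper in effect verifies at the point of use---Step 3 derives a uniform Lipschitz bound $|\tilde B_{n_j}(t_2)-\tilde B_{n_j}(t_1)|\le 4M_{\text{max}}\exp(M)\nu^{-1}(t_2-t_1)$ before invoking ``the second part of Helly's theorem''---so your instinct that the application is safe is right, but appealing to the application is not a proof of the theorem as stated; the statement itself needs the extra hypothesis.
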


\begin{lem}\label{lem:Aconv}
\begin{equation}
    \sup_{s \in [0, \tau], B \in \bbB}|\hat \bbM(B, s) - \bbM(B, s)| \to 0~~\text{a.s.}.
\end{equation}
\end{lem}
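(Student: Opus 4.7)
The plan is to reduce the claim entrywise to a uniform strong law of large numbers (ULLN) over the index set $(s, B) \in [0, \tau] \times \bbB$. Each entry of $\hat\bbM(B, s) - \bbM(B, s)$ takes the form $(\E_n - \E) g_{s, B}(O)$, where $O$ denotes a single subject's observables and $g_{s, B}$ is a measurable function. The lemma will follow once I verify that each of the four resulting function classes $\mathcal{G} = \{g_{s,B}: s \in [0,\tau], B \in \bbB\}$ is $P$-Glivenko-Cantelli.

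First I would establish two structural properties of $\mathcal{G}$. Under Assumption \ref{assump:ivbound}, $|Z^c|, |D(s)^c|, |Z|, |D(s)| \le 2 M_{\max}$ and $Y(s) \in [0, 1]$; combined with $B(0) = 0$ and $\cV(B) \le M$ for $B \in \bbB$, the exponent is uniformly bounded,
\[
\left|\int_0^{s-} D(u)\, dB_D(u) + Z\{B_Z(s-) - B_Z(0)\}\right| \le 2 M_{\max} M,
\]
so each $g_{s, B}$ has a deterministic envelope $C_0 = C_0(M, M_{\max})$. Because $D(\cdot)$ changes only at the discrete grid $\{1, \ldots, M\}$, its total variation is deterministically bounded by $M \cdot M_{\max}$; integration by parts together with the Lipschitz property of $\exp$ on $[-2 M_{\max} M, 2 M_{\max} M]$ then yields a deterministic constant $L$ such that
\[
|g_{s, B_1}(O) - g_{s, B_2}(O)| \le L \lVert B_1 - B_2 \rVert_\infty
\]
uniformly in $s$, $O$, and $B_1, B_2 \in \bbB$.

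To close the GC argument, the cleanest route is $L_1$-bracketing. By the Jordan decomposition, every $B \in \bbB$ is the difference of two nondecreasing functions bounded by $M$; the class of such monotone functions has polynomial $L_1$-bracketing entropy (van der Vaart and Wellner, 1996), and preservation under bounded Lipschitz compositions and products transfers this bracketing bound to $\mathcal{G}$. Finite $L_1(P)$-bracketing then yields the GC property via Theorem 2.4.1 of van der Vaart and Wellner, and uniformity in $s$ is handled by the cadlag structure of $D, N, Y$, reducing the $s$-supremum to a countable dense subset of $[0, \tau]$. The main obstacle is that $\bbB$ is not totally bounded in $\lVert \cdot \rVert_\infty$, so neither a naive sup-norm $\epsilon$-net nor the Lipschitz bound alone closes the argument; the bracketing detour via monotone decomposition is what does the work. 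An alternative subsequence approach based on Helly's selection theorem (stated above) is possible, extracting $B_{n_k} \to B^* \in \bbB$ pointwise at continuity points and upgrading to convergence of the Stieltjes integral via integration by parts and dominated convergence, but it requires extra care whenever the jump locations of the limiting $B^*$ and of $D(\cdot)$ coincide.
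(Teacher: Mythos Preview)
Your proposal is correct and follows essentially the same Glivenko--Cantelli route as the paper: the paper likewise writes $\hat\bbM(B,s)-\bbM(B,s)=(\E_n-\E)\phi_{B,s}$ for the matrix integrand $\phi_{B,s}$ and then argues that $\{\phi_{B,s}:B\in\bbB,\,s\in[0,\tau]\}$ is Glivenko--Cantelli because its members have uniformly bounded variation, citing Example~19.11 of van der Vaart. Your version is considerably more fleshed out---supplying the envelope, the sup-norm Lipschitz bound via integration by parts, and the Jordan-decomposition bracketing---whereas the paper's proof is effectively a one-line appeal to the bounded-variation GC result.
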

\begin{proof}
Define
\begin{equation}
    \phi_{B, s}\{T, \Delta, Z, \bar D\} := \begin{pmatrix}
    Z^cD(s) & Z^cD(s)^cD(s)\\
    Z^cZ &Z^cD(s)^cZ
    \end{pmatrix}Y(s)e^{\int_0^{s-}(D(u), Z)dB(u)}.
\end{equation}
Hence $\hat \bbM(B, s) - \bbM(B, s) = (\P_n - \P)\phi_{B, s}$. To prove the Lemma it suffices to show that $\{\phi_{B, s}: B \in \bbB, s \in [0, \tau]\}$ is Gilvenko-Cantelli \citep{van1996weak}. Note that all functions that are of bounded variation form a Gilvenko-Cantelli class \cite[Example 19.11]{van2000asymptotic}. This is immediate since $\max_{H, s}\cV(\phi_{B, s})$ can be proved to be finite easily.
\end{proof}
\subsection{Proof of Theorem \ref{thm:cons}}\label{sec:proofcons}

We give a roadmap for the proof of consistency.
\begin{enumerate}
    \item We construct a modified version $\tilde B(t) = \tilde B_{n}(t)$ of the estimator $\hat B(t)$ that is uniformly of bounded variation over $n$.
    \item We show that 
    \begin{equation}\label{eq:gammaconverge}
        \sup_{s \in [0, \tau], B \in \bbB}|\Upsilon_n(B, s) - \Upsilon(B, s)| \to 0~~ \text{a.s.}.
    \end{equation}
    \item These, together with the Helly's Selection Theorem, imply $\lVert\tilde B(t) - B(t)\rVert_\infty \to 0$ a.s..
    \item Finally we show that $\tilde B(t)$ is equal to $\hat B(t)$ in large samples, and therefore, $\|\hat B(t) - B(t)\|_\infty \to 0,$ a.s..
\end{enumerate}

\bigskip
\noindent
{\sc Step 1:} Let $\xi(y) = \text{sgn}(y) \min(|y|, M)$. We define the modified estimator $\tilde B_n$ to be the solution to the equation $B(t) = \Upsilon_n\{\xi(B), t\}$. Note that by Lemma \ref{lem:Aconv} and Assumption \ref{assump:uniqsolu}, the total variation
\begin{equation}
    \cV(\tilde B) \le 4\tau M_{\text{max}}\exp(M) /\nu
\end{equation}
Therefore $\tilde B(t)$ is uniformly of bounded variation over $n$. Also, since $\tilde B_D(0) = 0$, $\tilde B_Z(0) = 0$, $\tilde B$ is uniformly bounded, therefore Helly's selection theorem applies. Note that $\Upsilon\{\xi(B), t\} = \Upsilon(B, t) = B$.

\bigskip
\noindent
{\sc Step 2:} We want to show
\begin{equation}\label{eq:barupsilonconverge}
    \sup_{s \in [0, \tau], B \in \bbB}|\bar \Upsilon_n(B, s) - \Upsilon(B, s)| \to 0, ~~\text{a.s.}.
\end{equation}
To this end, we first define
\begin{equation}
    \psi_{H,t}(T, \Delta, Z, D(\cdot)) := (\bbM(H(\cdot), t))^\dagger \begin{pmatrix}
Z^c\\
Z^cD(s)^c
\end{pmatrix}e^{\int_0^{T-}(D(s), Z - D(s))dH(s)}N(t),
\end{equation}
and therefore $\bar \Upsilon_n(H, t) = \E_n\psi_{H,t}$ and $\Upsilon(H, t) = \P\psi_{H,t}$. Then it suffices to show that $\{\psi_{H,t}: B \in \bbB, t \in [0, \tau]\}$ is Gilvenko-Cantelli. This result is an immediate consequence of the following facts:
\begin{enumerate}
    \item All functions that are of bounded variation form a Gilvenko-Cantelli class \cite[Example 19.11]{van2000asymptotic}. Therefore the function class $\{(Z^c, Z^cD(s)^c)^\top e^{\int_0^{T-}(D(s), Z - D(s))dH(s)}N(t): B \in \bbB,~t \in [0, \tau]\}$ is Gilvenko-Cantelli.
    \item A Gilvenko-Cantelli class multiplied by a uniformly bounded, measurable function remains Gilvenko-Cantelli \cite[Example 2.10.10]{van1996weak}.
\end{enumerate}

\begin{equation}
    \sup_{s \in [0, \tau], B \in \bbB}|\bar \Upsilon_n(B, s) - \Upsilon(B, s)| = \|(\P_n - \P)\psi_{H,t}\|_\infty \to 0, ~~\text{a.s.}.
\end{equation}
Now by Lemma \ref{lem:Aconv},
\begin{equation}
    \sup_{s \in [0, \tau], B \in \bbB}|\bar \Upsilon_n(B, s) - \Upsilon_n(B, s)| \to 0, ~~\text{a.s.}.
\end{equation}
These two imply \eqref{eq:gammaconverge}.

\bigskip
\noindent
{\sc Step 3:} We prove $\lVert\tilde B(t) - B(t)\rVert_\infty \to 0$ a.s., by contradiction. Without loss of generality, we assume
\begin{equation}
    \liminf_{n \to \infty}\lVert\tilde B(t) - B(t)\rVert_\infty > 0.
\end{equation}
By Helly's Selection Theorem, there exists a subsequence $\{n_j\}$ such that $\tilde B_{n_j}(t) - B(t)$ converges to some limit $H(t)$. We further claim this limit is continuous, in fact, Lipschitz continuous. To see this, for any $t_1 < t_2$ in $[0, \tau]$, when $n_j$ is large enough
\begin{align}
|B(t_2) - B(t_1)| &\le |B(t_2) - \tilde B_{n_j}(t_2)| + |\tilde B_{n_j}(t_2) - \tilde B_{n_j}(t_1)| \\
&~~+ |\tilde B_{n_j}(t_1) - B(t_1)|\\
&\le \bigg|\int_{t_1}^{t_2}(\hat \bbM(s))^\dagger\E_{n_j} \begin{pmatrix}
Z^c\\
Z^cD(s)^c
\end{pmatrix}e^{\int_0^{s-}(D(u), Z)d\xi\{\tilde B_{n_j}(u)\}}dN(s)\bigg| + 2\eps\\
&\le 4 M_{\text{max}} \exp(M)/\nu(t_2 - t_1)+ 2\eps,
\end{align}
for any $\eps > 0$. Therefore, by the second part of Helly’s theorem, the convergence of the sub-subsequence is uniform. Going further, the limit $B$ satisfies $B(t) = \Upsilon(B, t)$ since
\begin{align*}
\lVert B(t) -  \Upsilon(\xi(B), t)\rVert_\infty &\le \lVert B(t) - \tilde B_{n_j}(t)\rVert_\infty + \lVert \tilde B_{n_j}(t) - \Upsilon_{n_j}\{\xi(\tilde B_{n_j}), t\}\rVert_\infty \\
&~~+ \lVert \Upsilon_{n_j}\{\xi(\tilde B_{n_j}), t\} -  \Upsilon(B, t)\rVert_\infty + \lVert \Upsilon\{\xi(\tilde B_{n_j}), t\} -  \Upsilon(B, t)\rVert_\infty\\
&~~\to 0,
\end{align*}
where the last statement results from the uniform convergence of $\tilde B_{n_j}(t)$, the definition of $\tilde B_{n_j}(t)$, \eqref{eq:gammaconverge} and \eqref{eq:gammaupperdiff}. The fact that the solution $B$ to $B(t) = \Upsilon(B, t)$ is unique by Assumption \ref{assump:uniqsolu} leads to a contradiction. It hence follows that $\lVert\tilde B(t) - B(t)\rVert_\infty \to 0$ a.s..

\bigskip
\noindent
{\sc Step 4:} Since $\lVert B(t) \rVert_\infty \le M^\circ$ and we just showed $\lVert \tilde B(t) - B(t)\rVert_\infty \to 0$ a.s., for sufficiently large $n$ we have $\lVert \tilde B(t) \rVert_\infty \le M^\circ + \frac{1}{2}(M - M^\circ) < M$, and therefore $\xi(\tilde B(t)) = \tilde B(t)$. Therefore, for sufficiently large $n$, $\tilde B(t)$ satisfies $\tilde B(t) = \Upsilon\{\xi(\tilde B), t\} =  \Upsilon(\tilde B, t)$, or in other words, $\hat B(t) = \tilde B(t)$. We thus have $\sup_{t \in [0, \tau]} |\hat B(t) - B(t)| \to 0,$ a.s.

The consistency of $(\hat B_D(t, \hat \theta), \hat B_Z(t, \hat \theta))^\top$ then follows immediately by a Taylor series expansion since $\hat \theta$ is consistent for $\theta$.

For asymptotic normality, we aim to provide a sum of i.i.d. representation of $\sqrt{n}\{\hat B(t, \theta) - B(t, \theta)\}$ heuristically. To that end, we rewrite the normalized residual 
\begin{equation}
    V_n(t, \theta) := \sqrt{n}\{\hat B(t, \theta) - B(t, \theta)\}
\end{equation} 
at a fixed $\theta$ as a solution to a Volterra equation, which shall yield a sum of i.i.d. representation at each time point $t$ in this case. To establish this, we integrate by part the Riemann–Stieltjes integral (the integral is interpreted pathwise for the C\`adl\`ag stochastic process $D_i(s)$, $B_D$ and $B_Z$ are assumed to be continuous),
\begin{align}
\int_0^{s-} (D_i(u), Z_i) dB(u) &= (D_i(s), Z_i) B(s-) - \int_0^{s-} B(u) d(D_i(u), Z_i)^\top \\
&= (D_i(s), Z_i) B(s-) - \int_0^{s-} B_D(u) dD_i(u) \\
&=: G_{1, i}(s-) + G_{2, i}(s-). \label{eq:integralbyparts}
\end{align}
By Theorem \ref{thm:cons} and an application of Slutsky's Theorem, we can write 
\begin{equation}
    V_n(t, \theta) = \frac{1}{\sqrt{n}}\int_0^t \sum_{i = 1}^n H_i(s, \hat B) \{dN_i(s) - (D_i(s), Z_i)dB(s, \theta)\} + o_p(1), 
\end{equation}
where 
\begin{equation}
    H_i(s, B) := \{M(s)\}^\dagger (Z_i^c, Z_iD_i^c(s))^\top \exp\left\{\int_0^{s-} (D_i(u), Z) d\hat B(u, \theta)\right\}.
\end{equation}

Now we are ready to write down the Volterra equation. By consistency and a Taylor expansion, it is easy to see that
\begin{eqnarray}
V_n(t, \theta) &=& \frac{1}{\sqrt{n}}\int_0^t \sum_{i = 1}^n H_i(s, B)\{dN_i(s) - (D_i(s), Z_i)dB(s, \theta)\}\\
&&~~~+\int_0^t V_n(s-, \theta)\{1 + o_p(1)\}\sum_{i = 1}^n\frac{\partial H_i(s, B)}{\partial B(s-, \theta)}dN_i(s) + o_p(1), 
\end{eqnarray}
which by \eqref{eq:integralbyparts} yields
\begin{eqnarray}
V_n(t, \theta) &=& \frac{1}{\sqrt{n}}\int_0^t \sum_{i = 1}^n H_i(s, B)\{dN_i(s) - (D_i(s), Z_i)dB(s, \theta)\}\\
&&+\int_0^t V_n(s-, \theta)\{1 + o_p(1)\} \sum_{i = 1}^n\frac{\partial H_i(s, B)}{\partial G_{1, i}(s-)}(D_i(s-), Z_i) dN_i(s)\\
&&+\int_0^t\sum_{i = 1}^n\frac{\partial H_i(s, B)}{\partial G_{2, i}(s-)}\left[\int_0^{s-}V_n(u, \theta)\{1 + o_p(1)\}d(D_i(u), Z_i)\right] dN_i(s)\\
&&~~+ o_p(1)\\
&=& \frac{1}{\sqrt{n}}\int_0^t \sum_{i = 1}^n H_i(s, B)\{dN_i(s) - (D_i(s), Z_i)dB(s, \theta)\}\\
&&+\int_0^t V_n(s-, \theta)\{1 + o_p(1)\} \sum_{i = 1}^n\frac{\partial H_i(s, B)}{\partial G_{1, i}(s-)}(D_i(s-), Z_i) dN_i(s)\\
&&+ \int_0^t V_n(s, \theta)\sum_{i = 1}^n \left\{\int_{s+}^t \frac{\partial H_i(u, B_D)}{\partial G_{2, i}(u-)} dN_i(u)\right\} (dD_i(s), dZ_i),
\end{eqnarray}
where the last equation follows by Fubini's theorem. Together we have the Volterra-equation,
\begin{align}
V_n(t, \theta) &= \frac{1}{\sqrt{n}}\int_0^t \sum_{i = 1}^n H_i(s, B)\{dN_i(s) - (D_i(s), Z_i)dB(s, \theta)\}\\
&+\int_0^t V_n(s-, \theta) \sum_{i = 1}^n\left[\frac{\partial H_i(s, B)}{\partial G_{1, i}(s-)} (D_i(s-), Z_i)dN_i(s) + \left\{\int_{s+}^t \frac{\partial H_i(u, B_D)}{\partial G_{2, i}(u-)} dN_i(u)\right\}(dD_i(s), dZ_i)\right],\label{eq:volterra}
\end{align}
which admits a solution with explicit form given by
\begin{equation}
    V_n(t, \theta) = \frac{1}{\sqrt{n}}\int_0^t \cF(s, t)\sum_{i = 1}^n H_i(s, B)\{dN_i(s) - (D_i(s), Z_i)dB(s, \theta)\} + o_p(1),
\end{equation}
where
\begin{equation}
    \cF(s, t) := \prod_{(s, t]} \left(1 + \sum_{i = 1}^n\left[\frac{\partial H_i(u, B)}{\partial G_{1, i}(u-)}(D_i(u), Z_i) dN_i(u) + \left\{\int_u^t\frac{\partial H_i(\cdot, B)}{\partial G_{2, i}(\cdot-)} dN_i(\cdot)\right\}d(D_i(u), Z_i)\right]\right). 
\end{equation}
This leads to an i.i.d. representation
\begin{equation}
   V_n(t, \theta) = \frac{1}{\sqrt{n}}\sum_{i = 1}^n \epsilon_i^B(t)+ o_p(1), 
\end{equation}
with the $\epsilon_i^B(t)$ being zero-mean i.i.d. terms, defined as
\begin{equation}
    \epsilon_i^B(t) := \int_0^t \cF(s, t) H_i(s, B)\{dN_i(s) - D_i(s)dB_D(s)\}.
\end{equation}
This, together with a Taylor expansion, gives
\begin{eqnarray}
&&\sqrt{n}\{\hat B(t, \hat \theta) - B(t, \theta)\} \\
&=& \sqrt{n}\{\hat B(t, \theta) - B(t, \theta)\} + \sqrt{n}\{\hat B(t, \hat \theta) - B(t, \theta)\}\\
&=& \sqrt{n}\{\hat B(t, \theta) - B(t, \theta)\} + \frac{\partial B(t, \theta)}{\partial \theta}\bigg|_{\hat \theta}\{1 + o_p(1)\}\sqrt{n}(\hat \theta - \theta)+ o_p(1).
\end{eqnarray}
Finally we have 
\begin{equation}
    \sqrt{n}\{\hat B(t, \hat \theta) - B(t, \theta)\} = \frac{1}{\sqrt{n}}\sum_{i = 1}^n \epsilon_i(t, \theta) + o_p(1),
\end{equation}
where
\begin{equation}\label{eq:empiricalerror}
    \epsilon_i(t, \theta) := \epsilon_i^B(t) + \frac{\partial B_Z(t, \theta)}{\partial \theta}\bigg|_{\theta}\epsilon_i^\theta.
\end{equation}
With this i.i.d. representation \eqref{eq:empiricalerror}, similar arguments as in \citet{lin2000semiparametric, martinussen2017instrumental} can be adopted to establish the convergence of $V_n(t)$ in distribution to a Gaussian process. Also, \eqref{eq:empiricalerror} easily implies a variance estimate.

\section{Simulation Compatibility}
We establish that our proposed data generating process in the simulation study ensures that Asssumption \ref{assump:ivnointeraction} holds and is compatible with our model \eqref{eq:condstructuralinvalid}, where $B_D(t) = 0.2 t$ and $B_Z(t) = 0.1 t$.

To show that Asssumption \ref{assump:ivnointeraction} holds, note that we first generate $U$ from a normal distribution and $Z$ from a random binomial distribution. Then we generate a shift variable $W$ such that
\begin{equation}
    \P(W > t|Z, U) = \alpha(t; U)(2Z - 1) + \beta_0(t) + Z\beta_1(t),
\end{equation}
where we should pick $\alpha()$ and $\beta()$ so $\P(W > t|Z, U)$ is decreasing in $t$ and between $[0, 1]$. To make sure that the RHS indeed gives a valid distribution function for any $Z$ and $U$, we need 
\begin{equation}
    \beta_0(0) = 1, ~~~\alpha(0; U) = 0, ~~~\beta_1(0) = 0,
\end{equation}
and $\alpha(t; U) + \beta_0(t) + \beta_1(t)$ and $-\alpha(t; U) + \beta_0(t)$ nonnegative and decreasing. If choosing $\beta_0(t)$ as a regular distribution function, then $\alpha(t; U)$ and $\beta_1(t)$ have to converge to $0$ as $t \to \infty$.  Because $D(t) = \mathbbm{1}(W > t)Z + \mathbbm{1}(W \leq t)(1 - Z)$, we have
\begin{align}
    \E(D(t)|Z, U) &= Z\P(W > t|Z, U) + (1 - Z)(1 - \P(W > t|Z, U))\\
    &=(2Z - 1)\P(W > t|Z, U) + 1 - Z\\
    &=(2Z - 1)\beta(t; Z) + 1 - Z + \alpha(t; U).
\end{align}
Therefore $\alpha_Z(t; Z, L) = (2Z - 1)(\beta_0(t) + Z\beta_1(t)) + 1 - Z$ and $\alpha_U(t; U) = \alpha(t; U)$.

To show that our data generating process is compatible with our model \eqref{eq:condstructuralinvalid}, where $B_D(t) = 0.2 t$ and $B_Z(t) = 0.1 t$, it suffices to show that,
\begin{align}
    &\frac{\P\left(\tilde T(\bar d(m), \vec 0, Z) > t|\bar D(m) = \bar d(m), Z, U, \tilde T \ge m\right)}{\P\left(\tilde T(\bar d(m - 1), \vec 0, z = 0) > t|\bar D(m) = \bar d(m), Z = 0, U, \tilde T \ge m\right) } \\
    &= \exp\left\{-0.2\int_{m}^{t \wedge m + 1} d(m) dt - 0.1\int_{m}^{t \wedge m + 1} Z dt\right\}, \label{eq:simucondU}
\end{align}
which after integrating $U$ out, yields \eqref{eq:condstructuralinvalid}. We work on the numerator of the LHS of \eqref{eq:simucondU} first,
\begin{align}
    &\P\left(\tilde T(\bar d(m), \vec 0, Z) > t|\bar D(m) = \bar d(m), Z, U, \tilde T \ge m\right)\\
    &=\frac{\P\left(\tilde T(\bar d(m), \vec 0, Z) > t|\bar D(m) = \bar d(m), Z, U\right)}{\P\left(\tilde T > m|\bar D(m) = \bar d(m), Z, U\right)}=\frac{\P\left(\tilde T(\bar d(m), 0, z) > t|U\right)}{\P\left(\tilde T(\bar d(m), 0, z) > m|U\right)}\\
    &= \exp\left(-0.1  (t - m) - 0.2 \int_{m}^td(s)ds - 0.1 \int_{m}^tZds  - 0.15  U_{2}  (t - m)\right),
\end{align}
where the second equation follows by consistency and unconfoundedness (conditional on U) guaranteed by our data generating process. Now we turn to the denominator of the LHS of \eqref{eq:simucondU},
\begin{align}
    &\P\left(\tilde T(\bar d(m - 1), \vec 0, z = 0) > t|\bar D(m) = \bar d(m), Z = 0, U, \tilde T \ge m\right)\\
    &=\frac{\P\left(\tilde T(\bar d(m - 1), \vec 0, Z = 0) > t|\bar D(m) = \bar d(m), Z = 0, U\right)}{\P\left(\tilde T \ge m|\bar D(m) = \bar d(m), Z = 0, U\right)} = \frac{\P\left(\tilde T(\bar d(m - 1), \vec 0, Z = 0) > t|U\right)}{\P\left(\tilde T(\bar d(m - 1), \vec 0, Z = 0) \ge m|U\right)}\\
    &= \exp\left(-0.1  (t - m)  - 0.15  U_{2}  (t - m)\right),
\end{align}
now \eqref{eq:simucondU} is straightforward.

\end{document}